\newcommand{\state}{\ket{n_1,\cdots,n_m}}
\newcommand{\statealpha}{\ket{\alpha}}
\newcommand{\statebeta}{\ket{\beta}}
\newcommand{\coeffalpha}{\alpha_{n_1, \cdots ,n_m}}
\newcommand{\coeffbeta}{\beta_{n_1, \cdots ,n_m}}
\newcommand{\vac}{\ket{\mathsf{vac}}}
\newcommand{\Fock}{\mathcal{F}_m^n}
\newcommand{\apow}[2]{a^{\dag #2}_{#1}}
\newcommand{\homogeneous}{n_1+\cdots+n_m=n}
\newcommand{\C}{\mathbb{C}[\alpha, \overline{\alpha}]}
\renewcommand{\a}[1]{a^{\dag}_{#1}}
\newtheorem{theorem}{Theorem}
\newtheorem{proposition}{Proposition}
\newtheorem{corollary}{Corollary}
\newtheorem{definition}{Definition}
\newtheorem{remark}{Remark}
\title{Invariant in Linear Optics}
\author{Sébastien Draux\footnote{sebastiendraux@outlook.com}\ \footnotemark[2] \footnotemark[3], Simon Perdrix \footnotemark[2], Emmanuel Jeandel\footnotemark[2], Shane Mansfield\footnotemark[3]} 
\date{\footnotemark[2] Loria, Nancy\\ \footnotemark[3] Quandela, Massy}
\begin{document}

\maketitle

\begin{abstract}
    Linear optics (LO) prohibits certain transformations. In this paper, we study the conditions for a computation to be possible in LO. We find that there are finitely many polynomials such that each of these polynomials evaluates to the same value on two photonic states if and only if there is a LO circuit transforming one of these states into the other. The proof is non-constructive, so we then focus on methods to find such polynomials. 
\end{abstract}

\section{Introduction}

Photonics is a promising candidate for scalable quantum devices \cite{Kok_Munro_Nemoto_Ralph_Dowling_Milburn_2007,AbuGhanem_2024} with a multitude of applications, including fault-tolerant quantum computation \cite{Gliniasty_Hilaire_2024, Hilaire_Dessertaine_Bourdoncle_Denys_Gliniasty_Valentí-Rojas_Mansfield_2024, Bartolucci_2023}, quantum communication \cite{LCS2023Recent}, and near-term algorithms \cite{maring2024versatile}. Despite its versatility, it is fundamentally limited by the bosonic nature of photons. Entangling operations in linear optics (LO), such as controlled gates, are only possible using probabilistic methods such as post-selection and heralding \cite{knill2001scheme,knill2002quantum,Ralph72}. The success probability of a computation decreases exponentially with the number of entangling gates needed. Hence, describing the possible computations deterministically in LO is a crucial question. We examine here the problem of whether an output photonic state can be obtained from a given entry state using LO.\\

Similar questions have recently been studied, such as characterizing the gates possible in LO \cite{Garcia-Escartin_Gimeno_Moyano-Fernández_2019} or finding how much additional resource is necessary to obtain a given 2-photon state \cite{Kieling_2008,de2024simple}. We focus here on states and we only allow LO operations, leaving aside measurement-based strategies like post-selection and heralding.\\

Our approach is based on quantities preserved when a LO circuit is applied called invariants. Such quantities have been exhibited in \cite{Parellada_Gimeno_2023}. This paper carries out a systematic study of invariants in LO. Invariant theory is a well-documented branch of mathematics (see \cite{Sturmfels_2008,Mukai_Oxbury_2003,Derksen_Kemper_2015,Dolgachev_2003}) that investigates polynomial functions that are preserved under the action of some groups. In this paper, we apply and adapt standard methods of invariant theory to the action of the unitary group on the Fock space.\\

\Cref{defnot} introduces notations for the rest of the paper and recalls the basics of LO. In \cref{invsect}, we review generalities about invariant theory. It concludes with \cref{equiv}, which establishes a characterization of the transformations allowed in LO. It states that there are finitely many polynomials to evaluate on the input and output states to check if a computation is possible. This result is not constructive, so in \cref{moliensect}, we turn our interest to Molien's series, which is a powerful combinatorics tool encapsulating all the information we need about invariants. Finally, in \cref{compsect} we compute invariants for the problem of LO. General results, tools, and methods of invariant theory are not original. However, results specific to LO such as \cref{molien1,molien2} as well as the computations and findings in \cref{compsect} are new.

\section{Definitions and notations}\label{defnot}

In the rest of the paper, $n$ and $m$ are natural numbers that represent respectively the number of photons and the number of modes, $k$ is a natural number smaller than $m$ representing a given mode, and $n_1,\cdots, n_m$ are natural numbers such that $\homogeneous$. We denote $\state$ the state with $n_k$ photons in mode $k$. Let $a_k$ and $\a{k}$ be the annihilation and creation operators in mode $k$. Let $\vac = \ket{0,\cdots,0}$ be the vacuum state, the number of modes is left implicit but will always be clear in the context. With these notations:
\begin{equation}
    \state = \frac{\apow{1}{n_1}\cdots \apow{m}{n_m}}{\sqrt{n_1!\cdots n_m!}} \vac
\end{equation}

 Let $\Fock = \mathsf{Span}(\{\state \ |\ \homogeneous\})$ be the vector space of photonic states with $n$ photons on $m$ modes called the Fock space. It has dimension $\binom{n + m - 1}{n}$. For a vector $\alpha = (\alpha_{n_1,\cdots, n_m})_{\homogeneous}$, let denote
\begin{equation}
    \ket{\alpha} = \sum_{\homogeneous} \alpha_{n_1,\cdots,n_m} \state
\end{equation}
We refer to such a state either with the ket notation $\statealpha$ or just with the vector $\alpha$ of its coefficients.\\

Let $U(m)$ be the group of $m \times m$ unitary matrices and $D(m)$ the group of diagonal $m\times m$ unitaries. A LO circuit on $m$ modes is unitary $U \in U(m)$ acting on the creation operators as:
\begin{equation}
    \a{k} \mapsto \sum_{j = 1}^{m}U_{j,k} \a{j}
\end{equation}

Let $\rho(U)$ be the unitary induced by $U$ on $\Fock$. The action of $U$ on a state $\statealpha \in \Fock$ is denoted depending on the context $U.\statealpha$, $\rho(U) \ket{\alpha}$ or simply $U.\alpha$. By definition:
\begin{equation}
    U.\state = U. \left(\frac{\apow{1}{n_1}\cdots \apow{m}{n_m}}{\sqrt{n_1!\cdots n_m!}}\vac\right)=\prod_{k=1}^{m}\frac{1}{\sqrt{n_k!}}\left(\sum_{j = 1}^{m}U_{j,k} \a{j}\right)^{n_k}\vac
\end{equation}

Let $\mathcal{O}(\alpha) = \{U.\ket{\alpha} \ | \ U\in U(m)\}$ be the orbit of $\ket{\alpha}$: all the states computable from $\alpha$. If $\beta$ is another state, either $\beta\in \mathcal{O}(\alpha)$ and $\mathcal{O}(\alpha)  = \mathcal{O}(\beta)$ or $\beta\notin \mathcal{O}(\alpha)$ and then $\mathcal{O}(\alpha) \cap \mathcal{O}(\beta) = \emptyset$.\\

With this formalism, the question we want to address is the following: find a necessary and sufficient condition to tell if given two photonic states $\statealpha,\statebeta \in \Fock$ there is a unitary $U\in U(m)$ such that $\statebeta = U.\statealpha$.

\section{Invariants}\label{invsect}

To address this question of knowing when a transformation $\ket{\alpha}\mapsto\ket{\beta}$ is possible in LO, we make use of invariant theory. Invariant theory is the study of polynomial quantities that do not change when a certain class of transformation is applied. The following results are well-known general theorems in invariant theory and applications of these to the case of the action of the unitary group on the Fock space with possibly slight adjustments. Our main source for this section is the book \cite{Derksen_Kemper_2015}.

\subsection{Definitions}

Let $\C$ be the set functions from $\Fock$ to $\mathbb{C}$ that are polynomial in the coefficients $\alpha_{n_1,\cdots, n_m}$ as well as their complex conjugates $\overline{\alpha_{n_1,\cdots, n_m}}$. Let $\C_{d,d'}$ be the set of homogeneous polynomials of degree $d$ in $\alpha$ and $d'$ in $\overline{\alpha}$, each monomial has $d$ factors of the form $\alpha_{n_1,\cdots, n_m}$ and $d'$ of the form $\overline{\alpha_{n_1,\cdots, n_m}}$. Let $\C_d = \C_{d,d}$.\\

Where regular invariants study proper polynomial functions, allowing for complex conjugation is necessary, since there is no way for a polynomial to be invariant under phase transformations without the help of complex conjugation to cancel them out. Most of the theorems of invariant theory remain valid in this context, but some of them require us to be more careful. Unless stated otherwise, by a polynomial we mean an element of $\C$.\\

\begin{definition}\label{def_inv}
    An invariant is a function $f\in \C$ such that for any unitary $U \in U(m)$ $f(U.\alpha) = f(\alpha)$.\\
    An phase invariant is a function $f\in \C$ such that for any diagonal unitary $D \in D(m)$ $f(D.\alpha) = f(\alpha)$.\\
\end{definition}

The most basic example of an invariant, and in particular of phase invariant, is the squared norm:
\begin{equation}
    \| \alpha \|^2 = \sum_{\homogeneous} n_1! \cdots n_m! |\coeffalpha|^2
\end{equation}

Phase invariants include the squared modulus of the coefficients: $|\coeffalpha|^2$ but also (for $n = m = 2$) $\alpha_{2,0}\alpha_{0,2} \overline{\alpha_{1,1}}^2$.\\

\begin{theorem}
Let $\ket{\alpha},\ket{\beta} \in \Fock$.
    \begin{itemize}
        \item If $\ket{\beta}$ can be obtained from $\ket{\alpha}$ using LO, then for all invariants $f$: $f(\beta) = f(\alpha)$
        \item If $\ket{\beta}$ can't be obtained from $\ket{\alpha}$ using LO, then there is an invariant $f$ such that $f(\beta) \neq f(\alpha)$
    \end{itemize}
\end{theorem}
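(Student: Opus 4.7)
The first bullet is immediate from \cref{def_inv}: if $\statebeta = U.\statealpha$ for some $U \in U(m)$, then $f(\beta) = f(U.\alpha) = f(\alpha)$ for every invariant $f$.

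For the converse, I would argue the contrapositive by constructing, whenever $\beta \notin \mathcal{O}(\alpha)$, an invariant in $\C$ that separates $\alpha$ from $\beta$; this is the classical orbit-separation argument for compact group actions. First, since $U(m)$ is compact and acts continuously on the finite-dimensional space $\Fock$, the orbits $\mathcal{O}(\alpha)$ and $\mathcal{O}(\beta)$ are compact, hence closed, and by assumption disjoint. Urysohn's lemma then provides a continuous function $\phi:\Fock\to[0,1]$ with $\phi\equiv 0$ on $\mathcal{O}(\alpha)$ and $\phi\equiv 1$ on $\mathcal{O}(\beta)$. The algebra $\C$ contains the constants, is stable under complex conjugation, and separates points of $\Fock$, so by Stone--Weierstrass there is a polynomial $p\in\C$ with $|p-\phi|<1/3$ uniformly on the compact set $\mathcal{O}(\alpha)\cup\mathcal{O}(\beta)$.

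Next, I would symmetrise $p$ through the Reynolds operator
\[
R(p)(\gamma)\;:=\;\int_{U(m)} p(U.\gamma)\, dU,
\]
where $dU$ is the normalised Haar measure on $U(m)$. Left-invariance of $dU$ makes $R(p)$ an invariant in the sense of \cref{def_inv}, and because $R(p)$ is constant on orbits and is the average of $p$ along each one, the inequalities above give $|R(p)(\alpha)|<1/3$ and $|R(p)(\beta)-1|<1/3$, so that $R(p)(\alpha)\neq R(p)(\beta)$.

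The step I expect to be the main obstacle is checking that $R(p)$ is again an element of $\C$, rather than merely a continuous invariant function. This relies on the action $\gamma\mapsto U.\gamma$ being linear in $\gamma$ (and antilinear in its conjugate): any monomial in $\alpha,\overline{\alpha}$ of bi-degree $(d,d')$ pulls back under $U$ to a bi-homogeneous polynomial of the same bi-degree in $\gamma,\overline{\gamma}$, whose coefficients are bounded continuous functions of the entries of $U$ and $\overline{U}$. Integrating these bounded coefficients against the Haar measure produces complex scalars, so $R(p)$ is a finite sum of monomials of the same bi-degrees as $p$, hence a genuine element of $\C$. Once this is verified, the separation $R(p)(\alpha)\neq R(p)(\beta)$ obtained above finishes the argument.
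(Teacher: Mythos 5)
Your argument is correct and follows essentially the same route as the paper: compactness and disjointness of the orbits, Stone--Weierstrass to obtain a separating polynomial in $\C$, and the Haar-average (Reynolds/averaging operator $f^*$) to turn it into an invariant that still separates $\alpha$ from $\beta$. The only differences are cosmetic --- you make explicit the Urysohn step and the verification that the average remains a polynomial, both of which the paper handles implicitly or in its earlier discussion of the averaging operator.
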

The first point is obvious and the second is proved later, in the form of \cref{distinct}. In the second case we will see that the invariant $f$ witnessing the impossibility of a computation in LO can be found among a finite set.\\

As an example, with $n = m = 2$, we will see that $f(\alpha) = |\alpha_{1,1} - 4 \alpha_{2,0}\alpha_{0,2}|^2$ is an invariant, so $\statebeta = \frac{1}{\sqrt{2}}\ket{2,0}$ cannot be obtained from $\statealpha = \ket{1,1}$ since $f(\alpha) = 1 \neq f(\beta) = 0$.\\

Let's make several remarks. First, since $f(U.\alpha)$ is obtained from $f(\alpha)$ by replacing each $\coeffalpha$ with a linear combination, the degree does not change. If $f$ in an invariant and $f = \sum_{d,d'}f_{d,d'}$ where $f_{d,d'} \in \C_{d,d'}$ then $f(\alpha) = f(U.\alpha) =  \sum_{d,d'}f_{d,d'}(U.\alpha)$. By identifying the degrees, we see that each $f_{d,d'}$ is an invariant. So we can restrict our study to homogeneous invariants.\\

Second, when $U$ is diagonal and $f(\alpha)$ is any monomial, then $f(U.\alpha)$ is still the same monomial up to a phase. It follows by the identification of monomials that, to study phase invariants, it is enough to study phase-invariant monomials. As all invariants are in particular also phase invariants, if f is an invariant, then all its monomials are phase invariants.\\



Finally, it is easy to see that for phases to cancel out, phase-invariant monomials need to have the same total degree in $\alpha$ and $\overline{\alpha}$, as in $\alpha_{2,0}\alpha_{0,2} \overline{\alpha_{1,1}}^2$. From the previous remarks it follows that we can focus on invariants in $\C_d$ for all values of $d$.

\subsection{Averaging operator}

Let's now define a way to produce a lot of invariants and hopefully enough to tell states apart if a transformation is impossible.

\begin{definition}
Let $f$ be a polynomial. Define
    \begin{equation}
        f^*(\alpha) = \int_{U(m)} f(U.\alpha)dU
    \end{equation}
    where the integral is taken with respect to the Haar measure of $U(m)$ \cite{Sepanski_2007}.
\end{definition}

This operation replaces $f(\alpha)$ by its average value on $\mathcal{O}(\alpha)$. Since for $U' \in U(m)$ we have $\mathcal{O}(\alpha) = \mathcal{O}(U'.\alpha)$, it is clear that $f^*(\alpha) = f^*(U'.\alpha)$ so $f^*$ is an invariant. Formally, it corresponds to the change of variable $U \mapsto U'U$ in the integral. Note that if $f$ is invariant, $f^* = f$, hence all invariant $g$ are of the form $f^*$ (take $f = g$). Moreover, if $f$ is an invariant and $g$ is another polynomial, then $(gf)^* = g^*f$.\\

The function $f^*(\alpha)$ is still a polynomial because when evaluating this integral and expanding the expression, all the factors $\coeffalpha$ can be taken out of the integral leaving coefficients of the form 
$$\int_{U(m)} U_{i_1,j_1}\cdots U_{i_d, j_d} \overline{U_{i'_1,j'_1}}\cdots\overline{U_{i'_{d'}, j'_{d'}}} dU$$
These integrals have been calculated in \cite{Collins_Śniady_2006}.

\begin{theorem}
    \begin{align*}
        \int_{U(m)}& U_{i_1,j_1}\cdots U_{i_d, j_d} \overline{U_{i'_1,j'_1}}\cdots\overline{U_{i'_{d}, j'_{d}}} dU\\
        =& \sum_{\sigma,\tau\in S_d} \delta_{i_1, i'_{\sigma(1)}} \cdots \delta_{i_d, i'_{\sigma(d)}}\delta_{j_1, j'_{\sigma(1)}} \cdots \delta_{j_d, j'_{\sigma(d)}}W(\tau \sigma^{-1})
    \end{align*}
    where $W$ is the Weingarten function defined as:
    \begin{equation*}
        W(\sigma) = \frac{1}{d!^2}\sum_{\lambda}\frac{\chi^\lambda(1)^2\chi^\lambda(\sigma)}{s_{\lambda,n}(1)}
    \end{equation*}
    where the sum ranges over all partition $\lambda$ of $d$, $\chi^\lambda$ is the character of $S_d$ corresponding to $\lambda$ and $s_{\lambda,n}$ is the Schur polynomial of $\lambda$.\\
    If $d \neq d'$:
    $$\int_{U(m)} U_{i_1,j_1}\cdots U_{i_d, j_d} \overline{U_{i'_1,j'_1}}\cdots\overline{U_{i'_{d'}, j'_{d'}}} dU = 0$$
\end{theorem}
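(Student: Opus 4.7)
The plan is to interpret the integral as an orthogonal projection onto the $U(m)$-invariant subspace of a suitable tensor space, use Schur-Weyl duality to describe this subspace explicitly, and then invert the resulting Gram matrix.

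First, the vanishing case $d \neq d'$ follows from a scaling argument. The Haar measure is invariant under $U \mapsto e^{i\theta} U$, whereas the integrand transforms by the factor $e^{i(d - d')\theta}$. Equating the two expressions of the integral for every $\theta$ forces it to vanish whenever $d \neq d'$.

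For the main case $d = d'$, set $V = (\mathbb{C}^m)^{\otimes d}$ and view the integrand as a matrix entry of $U^{\otimes d} \otimes \overline{U}^{\otimes d}$ acting on $V \otimes \overline{V}$. Then $T := \int_{U(m)} U^{\otimes d} \otimes \overline{U}^{\otimes d}\, dU$ lies in $\mathrm{End}(V \otimes \overline{V})$, and by left- and right-invariance of the Haar measure it is $U(m)$-equivariant, fixes every invariant vector, and satisfies $T^2 = T$; hence $T$ is the orthogonal projection onto the $U(m)$-fixed subspace. Via the canonical isomorphism $V \otimes \overline{V} \cong \mathrm{End}(V)$, these fixed vectors are exactly the operators commuting with the diagonal action of $U(m)$ on $V$, and by Schur-Weyl duality this commutant coincides with the image of the group algebra $\mathbb{C}[S_d]$ permuting tensor factors. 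Concretely, the invariant subspace is spanned by the tensors $e_\sigma$ with $(e_\sigma)_{I, I'} = \prod_{k=1}^{d} \delta_{i_k, i'_{\sigma(k)}}$ for $\sigma \in S_d$, linearly independent when $d \leq m$ and otherwise satisfying the expected linear relations.

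A direct index summation then produces the Gram matrix $G_{\sigma, \tau} = \langle e_\sigma, e_\tau \rangle = m^{c(\sigma \tau^{-1})}$, where $c(\pi)$ denotes the number of cycles of $\pi$. Writing the projector as $T = \sum_{\sigma, \tau} (G^{-1})_{\sigma, \tau}\, e_\sigma \otimes e_\tau^*$ and reading off matrix elements against the computational basis yields the claimed sum of delta products, with $W(\tau \sigma^{-1}) := (G^{-1})_{\sigma, \tau}$; dependence on $\sigma$ and $\tau$ only through their ratio reflects the fact that $\sum_\pi m^{c(\pi)} \pi$ is central in $\mathbb{C}[S_d]$. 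To reach the closed form for $W$, observe that this central element acts by a scalar on each irreducible $S_d$-module $W_\lambda$, and by Schur-Weyl duality combined with Frobenius's character formula this scalar equals $d!\, s_{\lambda, m}(1)/\chi^\lambda(1)$. Inverting on each isotypic component and re-expanding in the group basis via character orthogonality delivers the stated Schur-polynomial expression. The main obstacle is this last representation-theoretic step: the projection picture is clean, but matching the inverse of the Gram matrix with the specific sum $\tfrac{1}{d!^2}\sum_\lambda \chi^\lambda(1)^2 \chi^\lambda(\sigma)/s_{\lambda, n}(1)$ requires careful bookkeeping with Frobenius's formula and the dimensions of the Weyl modules.
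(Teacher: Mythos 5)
The paper does not actually prove this theorem: it is quoted verbatim from the cited reference of Collins and \'Sniady, so there is no in-paper argument to compare against. Your proposal is essentially the standard proof from that reference, and it is correct in outline: the scaling argument disposes of $d\neq d'$; the integral is the orthogonal projection onto the $U(m)$-fixed subspace of $V\otimes\overline V$; Schur--Weyl duality identifies that subspace with the span of the permutation tensors $e_\sigma$; the Gram matrix is $m^{c(\sigma\tau^{-1})}$; and the central element $\sum_\pi m^{c(\pi)}\pi$ acts on $W_\lambda$ by $d!\,s_{\lambda,m}(1)/\chi^\lambda(1)$, which after inversion via central idempotents and character orthogonality yields exactly the stated Weingarten function. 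The one point you gesture at but do not resolve is the case $d>m$: there the $e_\sigma$ are linearly dependent, the Gram matrix $G$ is singular, and $G^{-1}$ must be replaced by the Moore--Penrose pseudo-inverse restricted to the span of the $e_\sigma$; correspondingly the sum over $\lambda$ must be restricted to partitions with at most $m$ rows, since $s_{\lambda,m}(1)=0$ otherwise and the printed formula would divide by zero. (The theorem as stated in the paper has the same blemish, and its $s_{\lambda,n}$ should in any case read $s_{\lambda,m}$, the Schur polynomial evaluated in $m$ variables.) For $d\leq m$, which covers the low-degree invariants actually computed in the paper, your argument is complete modulo the routine bookkeeping you acknowledge in the last step.
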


All the invariants can be obtained with this procedure. However, we don't need to compute $f^*$ for all $f$. Let $f$ be an invariant, write $f = \sum_j f_j$ where each $f_j$ is a monomial, hence a phase-invariant one. Then $f = f^* = \sum_j f_j^*$. So we just need to average phase-invariant monomials to compute all the invariants. Actually, averaging a polynomial that is not phase-invariant gives 0 as we will see it later.\\

As an example of a phase-invariant monomial, consider $|\coeffalpha|^2$. The following result will be justified in \cref{tenssect}.

\begin{proposition}\label{avgdeg2}
    Up to some multiplicative constant, $\left(|\coeffalpha|^2\right)^*$ is $\| \alpha \|^2$
\end{proposition}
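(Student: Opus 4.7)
The plan is to prove this by a representation-theoretic argument rather than by direct evaluation of the Haar integral. First, I would observe that averaging preserves the bidegree, so $\left(|\coeffalpha|^2\right)^*$ is an invariant of bidegree $(1,1)$, i.e.\ an element of $\C_{1,1}$ that is fixed by the $U(m)$ action. Elements of $\C_{1,1}$ are in canonical bijection with sesquilinear forms on $\Fock$, the bijection sending a form $B$ to the polynomial $\alpha \mapsto B(\alpha,\alpha)$. Hence the proposition reduces to showing that the space of $U(m)$-invariant sesquilinear forms on $\Fock$ is one-dimensional, spanned by $\|\alpha\|^2$.

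Next, I would use the identification $\Fock \cong \mathrm{Sym}^n(\mathbb{C}^m)$, the $n$-th symmetric power of the defining representation of $U(m)$, which is a standard irreducible representation; this identification is exactly what \cref{tenssect} is set up to justify. By Schur's lemma the space of $U(m)$-equivariant linear maps $\Fock \to \Fock$ is one-dimensional, and because the standard Hermitian form $\|\cdot\|^2$ is itself $U(m)$-invariant (unitaries preserve the inner product), every invariant sesquilinear form on $\Fock$ is obtained by composing $\|\cdot\|^2$ with such an equivariant endomorphism. Therefore the space of invariant sesquilinear forms is one-dimensional, spanned by $\|\alpha\|^2$.

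Applying this conclusion to $\left(|\coeffalpha|^2\right)^*$ yields the claim. To verify that the multiplicative constant is nonzero, I would observe that $|\coeffalpha|^2 \geq 0$ and is strictly positive on states whose $(n_1,\ldots,n_m)$-component does not vanish, so its Haar average is strictly positive at such states; together with proportionality to $\|\alpha\|^2$, this fixes the constant to be strictly positive, though its precise value depends on $(n_1,\ldots,n_m)$ and could in principle be extracted from the Weingarten formula stated above.

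The main obstacle is not the averaging computation itself but the identification $\Fock \cong \mathrm{Sym}^n(\mathbb{C}^m)$ together with its irreducibility as a $U(m)$-representation; this is deferred to \cref{tenssect} and is what makes the proof non-computational. Once that structural fact is in hand, Schur's lemma immediately kills all degree-$(1,1)$ invariants except the norm.
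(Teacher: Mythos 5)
Your argument is correct, but it takes a genuinely different route from the paper's. The paper justifies \cref{avgdeg2} in \cref{tenssect} by explicit Weingarten calculus: writing the state as a symmetric tensor $A$, the average of any bidegree-$(1,1)$ monomial $A_{\bm k}\overline{A_{\bm k'}}$ comes out as a linear combination of the contractions $f_\sigma$ with $\sigma\in S_n$, and for $d=1$ the symmetry of $A$ forces all the $f_\sigma$ to coincide with the tensor norm $\sum_{\bm k}A_{\bm k}\overline{A_{\bm k}}$, i.e.\ with $\|\alpha\|^2$ up to a constant. You instead identify $\C_{1,1}$ with sesquilinear forms on $\Fock\cong\mathrm{Sym}^n(\mathbb{C}^m)$ and invoke irreducibility plus Schur's lemma to conclude that the space of invariant such forms is one-dimensional. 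Both proofs lean on the symmetric-tensor picture, but they trade different inputs: your version avoids the Weingarten formula entirely and proves the stronger statement that \emph{every} degree-$(1,1)$ invariant is a multiple of $\|\alpha\|^2$ (which also explains the ubiquitous $1/(1-|z|^2)$ factor in the Molien series), at the cost of needing the irreducibility of $\mathrm{Sym}^n(\mathbb{C}^m)$ as a $U(m)$-representation, a standard fact the paper never states; the paper's version is heavier for this one statement but reuses exactly the machinery it needs anyway to generate all higher-degree tensor invariants, where your Schur's-lemma shortcut no longer applies because the relevant representations stop being irreducible. One small imprecision: the Haar average of $|\coeffalpha|^2$ at a state $\alpha$ is $\int_{U(m)}|(U.\alpha)_{n_1,\cdots,n_m}|^2\,dU$, so positivity of the $(n_1,\ldots,n_m)$-component of $\alpha$ itself does not literally make the integrand positive everywhere; you should say that the integrand is nonnegative, is positive at $U=I$ when that component is nonzero, and hence by continuity the integral is positive. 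This is easily repaired and does not affect the conclusion.
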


\subsection{A characterization of possible LO computations}

In this subsection, we present a characterization of allowed computations in LO. The proof utilizes the amazing algebraic and analytical properties of polynomials. On one hand, the Hilbert basis theorem says that polynomial ideals are finitely generated, and on the other hand, the Stone-Weierstrass theorem gives that any continuous function can be approximated by polynomials. Combining these with the averaging operator, we get the best of both worlds: a lot of flexibility encapsulated in finitely many generators.

\begin{proposition}\label{finite}
    For $n$ and $m$ fixed, there is a finite set of invariants $f_1,\cdots,f_N$ such that any invariant can be expressed as a sum and product of these.
\end{proposition}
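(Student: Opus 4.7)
The plan is to adapt Hilbert's classical finiteness argument for invariant rings, using Hilbert's basis theorem together with the averaging operator $f \mapsto f^*$ introduced in the previous subsection.

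First, I would consider the ideal $I \subset \C$ generated by all bihomogeneous invariants of strictly positive total degree. Since $\C$ is a polynomial ring in finitely many variables (namely the $\alpha_{n_1,\dots,n_m}$ and their conjugates, whose total number equals $2\binom{n+m-1}{n}$), it is Noetherian by Hilbert's basis theorem, so $I$ is finitely generated. I would then argue that the generators can be taken to be bihomogeneous invariants themselves: the defining generating set of $I$ consists of such invariants, and any ideal generated by a (possibly infinite) family of elements is generated by a finite subset of that family, which follows from Noetherianity by expanding any finite ideal-generating set in terms of the original family.

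Call these chosen generators $f_1, \dots, f_N$. The core of the proof is then an induction on total degree, showing that every bihomogeneous invariant $f$ of degree $d$ is a polynomial in $f_1, \dots, f_N$. The base case $d = 0$ is trivial since such an $f$ is a constant. For the inductive step, note that $f \in I$, so we can write
\begin{equation*}
    f = \sum_{i=1}^{N} g_i\, f_i
\end{equation*}
with each $g_i \in \C$, and by isolating bihomogeneous components we may assume $g_i$ is bihomogeneous of degree $d - \deg(f_i) < d$. Apply the averaging operator: since $f$ is invariant, $f = f^*$, and since each $f_i$ is invariant we have the key identity $(g_i f_i)^* = g_i^* f_i$ recalled earlier in the text. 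Thus
\begin{equation*}
    f = f^* = \sum_{i=1}^{N} g_i^*\, f_i.
\end{equation*}
Each $g_i^*$ is an invariant of degree strictly less than $d$, so by the induction hypothesis it is a polynomial in $f_1, \dots, f_N$, and hence so is $f$. Finally, a general invariant $f$ decomposes as a sum of its bihomogeneous components $f_{d,d'}$, each of which is itself an invariant (as noted in the remarks after the previous theorem), so the result extends from the bihomogeneous case to arbitrary invariants.

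The main conceptual obstacle is the passage from "$I$ is finitely generated as an ideal" to "the invariant subring is finitely generated as an algebra": an ideal generating set a priori only allows $\C$-linear combinations with arbitrary polynomial coefficients, not combinations that stay inside the invariant ring. This is precisely where the averaging operator intervenes, using the property $(gf_i)^* = g^* f_i$ to force the coefficients $g_i$ to become invariants themselves without disturbing the sum. A secondary subtlety, worth mentioning but not difficult, is that we are working in the bigraded ring $\C = \bigoplus_{d,d'} \C_{d,d'}$ rather than a singly graded one, but since both $I$ and the invariant ring respect this bigrading the argument goes through component by component.
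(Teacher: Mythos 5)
Your proof is correct and follows essentially the same route as the paper: Hilbert's basis theorem applied to the ideal generated by positive-degree invariants, followed by induction on degree using $f = f^* = \sum_i g_i^* f_i$ via the identity $(g f_i)^* = g^* f_i$. The additional care you take (choosing generators among the invariants themselves, handling the bigrading, stating the base case) only makes explicit what the paper leaves implicit.
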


\begin{proof}
    By the Hilbert basis theorem applied to the ideal $I^+$ generated by invariants of strictly positive degree, there exists a finite family $f_1, \cdots, f_N \in I^+$ such that any invariant $f$ can be written as $f = g_1f_1+\cdots g_Nf_N$ where the $g_j$ can be any polynomial.\\
    Let's now do an induction on the degree of $f$. We have $f = f^* = g_1^*f_1+\cdots g_N^*f_N$. The $g_j^*$ have a degree strictly less than $f$ so by induction they can be expressed as a sum and product of $f_1,\cdots,f_N$ so does $f$.
\end{proof}

Hilbert famously proved his basis theorem precisely for this purpose \cite{Hilbert_1890}. However, his original proof is nonconstructive. Later, a constructive method has been found that relies on Gröbner bases. 

\begin{proposition}\label{distinct}
    If $\statebeta \neq U.\statealpha$ for all $U \in U(m)$, then there is an invariant $f$ such that $f(\alpha) \neq f(\beta)$.
\end{proposition}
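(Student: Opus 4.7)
The plan is to use a standard compactness-plus-Stone-Weierstrass argument, combined with the averaging operator to turn an approximating polynomial into an invariant one. Since $U(m)$ is compact and its action on $\Fock$ is continuous, the orbits $\mathcal{O}(\alpha)$ and $\mathcal{O}(\beta)$ are compact subsets of $\Fock$, and by assumption they are disjoint.

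First I would produce a continuous separating function. Since $\mathcal{O}(\alpha)$ and $\mathcal{O}(\beta)$ are disjoint compact subsets of the metric space $\Fock$, the function $g(x) = d(x,\mathcal{O}(\alpha))/(d(x,\mathcal{O}(\alpha)) + d(x,\mathcal{O}(\beta)))$ is continuous, vanishes identically on $\mathcal{O}(\alpha)$, and equals $1$ identically on $\mathcal{O}(\beta)$. Restrict attention to a closed ball $B \subset \Fock$ large enough to contain both orbits.

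Next I would approximate $g$ by a polynomial. The algebra generated by the coordinates $\alpha_{n_1,\dots,n_m}$ together with their complex conjugates $\overline{\alpha_{n_1,\dots,n_m}}$ is a subalgebra of $\mathcal{C}(B,\mathbb{C})$ that contains the constants, is closed under complex conjugation, and separates points of $B$. By the complex Stone-Weierstrass theorem, it is dense in $\mathcal{C}(B,\mathbb{C})$, so there exists $p \in \C$ with $\sup_{x \in B}|p(x) - g(x)| < 1/3$. In particular $|p(x)| < 1/3$ on $\mathcal{O}(\alpha)$ and $|p(x) - 1| < 1/3$ on $\mathcal{O}(\beta)$.

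Finally I would average. Consider $f = p^*$, which lies in $\C$ and is invariant by the discussion following the definition of the averaging operator. Since $U.\alpha$ ranges over $\mathcal{O}(\alpha) \subset B$ as $U$ runs over $U(m)$, we obtain
\begin{equation}
    |f(\alpha)| = \left| \int_{U(m)} p(U.\alpha)\, dU \right| \le \frac{1}{3},
\end{equation}
and similarly $|f(\beta) - 1| \le 1/3$, so $f(\alpha) \neq f(\beta)$, as required. The only subtle point, and the step that deserves the most care, is verifying that Stone-Weierstrass applies in the form needed here, namely that polynomials in $\alpha$ and $\overline{\alpha}$ form a point-separating, conjugation-closed subalgebra of $\mathcal{C}(B,\mathbb{C})$; everything else is then routine.
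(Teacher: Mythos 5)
Your proof is correct and follows essentially the same route as the paper: disjoint compact orbits, complex Stone--Weierstrass to find a separating polynomial in $\alpha$ and $\overline{\alpha}$, then the averaging operator to make it invariant while keeping the values on the two orbits apart. The extra detail you supply (the explicit Urysohn-type function $g$ and the verification of the Stone--Weierstrass hypotheses) is a fleshed-out version of what the paper leaves implicit, not a different argument.
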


\begin{proof}
    By hypothesis, $\beta \notin \mathcal{O}(\alpha)$ so $\mathcal{O}(\alpha) \cap \mathcal{O}(\beta) = \emptyset$. Moreover, since $U(m)$ is compact, so are $\mathcal{O}(\alpha)$ and $\mathcal{O}(\beta)$. By the Stone Weierstrass theorem (complex version), there is a polynomial $f$ (with conjugations) such that $|f(\alpha') - 1| < 1$ for any $\alpha'\in \mathcal{O}(\alpha)$ and $|f(\beta') + 1| < 1$ for any $\beta'\in \mathcal{O}(\beta)$. The function $f$ is not invariant, but we can replace it with $f^*$, which is an invariant and still satisfies $|f^*(\alpha) - 1| < 1$ and $|f^*(\beta) + 1| < 1$. So in particular $f^*(\alpha) \neq f^*(\beta)$
\end{proof}

Here, it is crucial for us to allow for complex conjugate, as it is a condition in the complex version of the Stone-Weierstrass approximation theorem. In regular invariant theory, this result does not hold and orbits may fail to be distinguished using invariants.

\begin{corollary}\label{equiv}
    For fixed $n$ and $m$, there is a finite set of invariant $f_1,\cdots,f_N$ such that given $\alpha,\beta \in \Fock$: $f_j(\alpha) = f_j(\beta)$ for all $j$ if and only if the computation $\statealpha \mapsto \statebeta$ is possible in LO.
\end{corollary}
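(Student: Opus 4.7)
The plan is to combine the two preceding propositions directly: \cref{finite} provides a finite generating set for the algebra of invariants, and \cref{distinct} guarantees that invariants separate distinct orbits. Take $f_1,\ldots,f_N$ to be exactly the finite family produced by \cref{finite}.

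For the easy direction, suppose $\ket{\beta} = U.\ket{\alpha}$ for some $U \in U(m)$. Then by the very definition of invariance (\cref{def_inv}), every $f_j$ satisfies $f_j(\beta) = f_j(U.\alpha) = f_j(\alpha)$. No further work is needed here.

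For the converse, I would argue by contrapositive. Assume $\ket{\beta}$ cannot be obtained from $\ket{\alpha}$ by any LO circuit, i.e.\ $\ket{\beta} \neq U.\ket{\alpha}$ for all $U \in U(m)$. By \cref{distinct}, there exists some invariant $f \in \C$ with $f(\alpha) \neq f(\beta)$. By \cref{finite}, this $f$ can be written as $f = P(f_1,\ldots,f_N)$ for some polynomial $P$ in $N$ variables. Now if we had $f_j(\alpha) = f_j(\beta)$ for every $j \in \{1,\ldots,N\}$, then substituting into $P$ would yield
\begin{equation*}
    f(\alpha) = P(f_1(\alpha),\ldots,f_N(\alpha)) = P(f_1(\beta),\ldots,f_N(\beta)) = f(\beta),
\end{equation*}
contradicting the choice of $f$. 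Therefore at least one $f_j$ must satisfy $f_j(\alpha) \neq f_j(\beta)$, which is precisely the contrapositive of the desired implication.

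There is no real obstacle here: the statement is a clean corollary of the two results immediately preceding it, and the only subtlety is noting that the polynomial combination $P$ preserves equality of values whenever the generators agree. The deeper content — finite generation and orbit separation — has already been absorbed into \cref{finite} and \cref{distinct}.
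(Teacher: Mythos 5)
Your proof is correct and follows exactly the paper's route: take the generating family from \cref{finite}, note the forward direction is immediate from the definition of invariance, and use \cref{distinct} plus the polynomial expression of an arbitrary invariant in the generators for the converse. The paper states this in one line; you have simply spelled out the contrapositive step explicitly.
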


\begin{proof}
    Let's take the finite family $f_1,\cdots,f_N$ given by \cref{finite}. One implication is true by definition. The other is a direct consequence of \cref{finite,distinct}.
\end{proof}

As we said before, constructing such a finite generating set of invariants is not obvious, and even using Gröbner bases, we have no guarantees on their number $N$ and their degrees. It is an active research question in invariant theory to obtain information on these generators in general \cite{Wehlau,Derksen_2001,Symonds_2011}.

\section{Molien's series}\label{moliensect}
\subsection{Generalities}
The Molien's series is a combinatorial tool to study invariants. It is the generating series of the number of invariants of each degree. As such, computing it gives information on the generators.

\begin{definition}
    Fix $n$ and $m$. Let $I \subset \C$ be the ring of invariant polynomials. We define the Molien's series of $I$ as ($n$ and $m$ are left implicit):
    $$
    F(z) = \sum_{d = 0}^{+ \infty} \dim(I \cap \C_{d,d'}) z^d \overline{z}^{d'}
    $$
    Let $F'$ be the Molien series of phase invariants.
\end{definition}

This series encodes all the information we want about invariants and their generators. The definition still makes sense when the ring $I$ of invariants is replaced by another ring. In particular, we use it for subring of invariants.
\begin{theorem}\label{frac}
    If $I$ is generated by $f_1 \in \C_{d_1,d'_1}, \cdots, f_N \in \C_{d_N,d'_N}$ then:
    $$F(z) = \frac{P(z)}{(1 - z^{d_1}\overline{z}^{d'_1}) \cdots(1 - z^{d_N}\overline{z}^{d'_N})}$$
    where $P$ is some polynomial.
\end{theorem}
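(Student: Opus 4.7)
The plan is to realise $I$ as a quotient of a graded polynomial ring with known Hilbert series, and then control the correction using a finite free resolution.

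First I would introduce an auxiliary polynomial ring $R = \mathbb{C}[X_1, \ldots, X_N]$ and equip it with a bigrading by declaring that $X_i$ has bidegree $(d_i, d'_i)$. The assignment $X_i \mapsto f_i$ extends to a surjective morphism $\pi : R \to I$ of bigraded $\mathbb{C}$-algebras, because by hypothesis the $f_i$ generate $I$. Writing $J = \ker \pi$, I get an isomorphism $I \cong R/J$ of bigraded vector spaces. Since the monomials $X_1^{k_1} \cdots X_N^{k_N}$ form a $\mathbb{C}$-basis of $R$, each summand with fixed bidegree is finite-dimensional, and one computes directly
$$F_R(z, \overline z) = \sum_{d,d'} \dim R_{d,d'}\, z^d \overline z^{d'} = \prod_{i=1}^{N} \frac{1}{1 - z^{d_i} \overline z^{d'_i}}.$$

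Next I would invoke the Hilbert syzygy theorem: since $R$ is a polynomial ring in $N$ variables over a field, the finitely generated bigraded $R$-module $R/J$ admits a finite bigraded free resolution
$$0 \to F_s \to F_{s-1} \to \cdots \to F_1 \to F_0 \to R/J \to 0,$$
where each $F_j$ is a finite direct sum $\bigoplus_\ell R(-a_{j,\ell}, -b_{j,\ell})$ of shifted copies of $R$. Hilbert series are additive on short exact sequences of bigraded vector spaces, so alternating-summing the resolution gives
$$F(z,\overline z) = \sum_{j=0}^{s} (-1)^j \sum_\ell z^{a_{j,\ell}} \overline z^{b_{j,\ell}}\, F_R(z, \overline z) = \frac{P(z, \overline z)}{\prod_{i=1}^{N}\bigl(1 - z^{d_i} \overline z^{d'_i}\bigr)},$$
where $P(z, \overline z) = \sum_{j,\ell}(-1)^j z^{a_{j,\ell}} \overline z^{b_{j,\ell}}$ is a Laurent polynomial, in fact an honest polynomial because the shifts coming from a minimal resolution are all nonnegative (or one can clear denominators if needed). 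This is the announced form.

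The main conceptual obstacle is making sure the Hilbert syzygy theorem is applicable in our bigraded setting and that Hilbert series are well-defined, which rests on the finite-dimensionality of each bihomogeneous component. Both are standard once one observes that the bigrading on $R$ is positive (each $X_i$ has bidegree $(d_i, d'_i)$ with $d_i + d'_i > 0$, as $f_i \in I^+$), so each $R_{d,d'}$ is finite-dimensional and the theory from \cite{Derksen_Kemper_2015} applies verbatim. I would also remark that the numerator $P$ is not canonical (it depends on the chosen generators and resolution), but its existence is what the statement asserts.
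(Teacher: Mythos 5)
Your proof is correct and follows the standard argument: present $I$ as a bigraded quotient of the weighted polynomial ring $\mathbb{C}[X_1,\ldots,X_N]$, then apply the (bi)graded Hilbert syzygy theorem and additivity of Hilbert series along a finite free resolution. The paper states this theorem without proof, citing the invariant-theory literature, and your argument is exactly the expected justification, including the correct observation that positivity of the bidegrees $(d_i,d'_i)$ is what makes the graded components finite-dimensional and the numerator an honest polynomial.
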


The denominator contains information on the number of generators and their degree, while the polynomial $P$ in the numerator witnesses relations between them.

\begin{remark}\label{hidden}
We must be careful here. From the expression of $F$, one cannot deduce the degrees of a generating family. In particular, if we find an expression of $F$ with $P = 1$, the denominator may fail to witness all the generators. Some factors can simplify with the numerator and so won't appear in the expression of $F$.
\end{remark}

\subsection{Computation of the series}

The following theorem, known as Molien's formula \cite{Molien_1897}, is a way to compute this series. In the standard case, the proof relies on arguments from representation theory. Homogeneous polynomials of degree $d$ are seen as a representation of the group of interest, and the dimension of invariants of that degree is the number of copies of the trivial representation in this representation. This number can be computed as an average of the character, which is known in this case. For us (when allowing complex conjugation), the proof is essentially the same and can be found in \cite{Forger_1998}.
\begin{theorem}
    $$F(z) = \int_{U(m)} \frac{dU}{|\det(I - z\rho(U))|^2}$$
\end{theorem}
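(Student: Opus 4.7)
The plan is to give the classical representation-theoretic proof of Molien's formula, adapted to the bigraded complex-conjugate setting. The starting observation is that $\C_{d,d'}$ is a finite-dimensional $U(m)$-representation: defining $(U\cdot f)(\alpha) := f(U^{-1}\cdot\alpha)$, the subspaces $\C_{d,d'}$ are stable (substitution by a linear map sends $\alpha$ to a linear combination of $\alpha$'s and $\overline\alpha$ to the corresponding combination of $\overline\alpha$'s), and the fixed vectors are exactly $I\cap\C_{d,d'}$. The standard Haar-averaging argument says that $P:=\int_{U(m)}(U\cdot)\,dU$ is a projector onto the invariant subspace of any continuous finite-dimensional representation, so
$$\dim\bigl(I\cap\C_{d,d'}\bigr) = \mathrm{Tr}(P) = \int_{U(m)}\chi_{d,d'}(U)\,dU,$$
where $\chi_{d,d'}$ denotes the character of $\C_{d,d'}$.

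Next I would identify the character explicitly. Writing $V=\Fock$, one has $\C_{d,d'}\cong S^d(V^*)\otimes S^{d'}(\overline{V^*})$ as $U(m)$-modules. Let $\mu_1,\ldots,\mu_D$ be the eigenvalues of $\rho(U)$, where $D=\binom{n+m-1}{n}$; since $\rho(U)$ is unitary these lie on the unit circle. Because $V^*$ has matrix $\overline{\rho(U)}$ (the contragredient of a unitary is its conjugate) and $\overline{V^*}$ has matrix $\rho(U)$, the symmetric-power character formula $\chi_{S^d W}(U)=h_d(\text{eigenvalues of }W)$ together with the generating identity $\sum_d h_d(\lambda_1,\ldots,\lambda_D)z^d=\prod_i(1-z\lambda_i)^{-1}$ yields
$$\sum_{d,d'\ge 0}\chi_{d,d'}(U)\,z^d\overline{z}^{d'} = \prod_{i=1}^{D}\frac{1}{(1-z\overline{\mu_i})(1-\overline{z}\mu_i)} = \frac{1}{\bigl|\det(I-z\overline{\rho(U)})\bigr|^{2}}.$$
The Haar-preserving change of variable $U\mapsto U^{-1}$ (under which $\rho(U)\mapsto\overline{\rho(U)}^{\,T}$, preserving the modulus of the determinant) replaces $\overline{\rho(U)}$ by $\rho(U)$ inside the integral, so the right-hand side has the same integral as $1/|\det(I-z\rho(U))|^{2}$.

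Finally, a uniform-convergence argument closes the proof. For $|z|$ small enough, the series $\sum_{d,d'}\chi_{d,d'}(U)\,z^d\overline{z}^{d'}$ converges uniformly in $U$ on the compact group $U(m)$, because $|\chi_{d,d'}(U)|$ is bounded by $\dim\C_{d,d'}=\binom{D+d-1}{d}\binom{D+d'-1}{d'}$, which grows only polynomially in $d,d'$. Dominated convergence then lets me exchange sum and Haar integral; combining with the dimension formula from the first paragraph and extending the resulting equality of holomorphic functions by analytic continuation gives
$$F(z) = \int_{U(m)}\frac{dU}{|\det(I-z\rho(U))|^{2}}.$$
The step I expect to require the most care is the bookkeeping of complex conjugates, namely keeping straight whether the action on each tensor factor of $\C_{d,d'}$ is by $\rho(U)$, its dual, its conjugate, or its conjugate-dual, and checking that the $U\mapsto U^{-1}$ symmetry converts the resulting $|\det(I-z\overline{\rho(U)})|^{-2}$ into the stated form. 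Once that is pinned down, the rest is Schur orthogonality against the trivial character.
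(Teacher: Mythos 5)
Your argument is correct and is precisely the representation-theoretic route the paper itself indicates (and delegates to its reference): realize $\C_{d,d'}$ as $S^d(V^*)\otimes S^{d'}(\overline{V^*})$, count trivial subrepresentations by averaging the character, and sum the symmetric-power characters into $|\det(I-z\overline{\rho(U)})|^{-2}$, with the $U\mapsto U^{-1}$ substitution and the polynomial bound on $\dim\C_{d,d'}$ justifying the final form and the interchange of sum and integral. No gaps; this fills in exactly the sketch the paper leaves to \cite{Forger_1998}.
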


Finally, $\det(I - z\rho(U))$ only depends on the conjugacy class of $U$ in $U(m)$, it doesn't change when $U$ is replaced with $VUV^\dag$. Weyl integral formula \cite{Sepanski_2007} applies, and the integral reduces to one on the diagonal matrices of $U(m)$, the phases, with respect to a certain measure. More precisely, we have the following result.

\begin{theorem}
    $$F(z) = \frac{1}{m!(2 \pi)^m} \int_{0}^{2\pi}\cdots\int_{0}^{2\pi} \frac{\prod_{k < l} |e^{i\theta_k} - e^{i\theta_l}|^2}{\prod_{n_1+\cdots+n_m = n}|1 - z e^{i(n_1 \theta_1+\cdots+n_m\theta_m)}|^2} d\theta_1\cdots d \theta_m  $$
\end{theorem}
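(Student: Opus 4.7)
The plan is to start from the previous theorem
$$F(z) = \int_{U(m)} \frac{dU}{|\det(I - z\rho(U))|^2}$$
and reduce the integral over $U(m)$ to one over diagonal unitaries via Weyl's integration formula. The hypothesis for that formula is that the integrand be a class function, and indeed $\det(I - z\rho(U))$ depends only on the spectrum of $\rho(U)$ (hence only on the conjugacy class of $U$ in $U(m)$), since replacing $U$ by $VUV^\dagger$ replaces $\rho(U)$ by $\rho(V)\rho(U)\rho(V)^\dagger$ and determinants are conjugation-invariant.

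Next I would recall the Weyl integration formula for $U(m)$: for any class function $g$,
$$\int_{U(m)} g(U)\,dU = \frac{1}{m!(2\pi)^m}\int_0^{2\pi}\cdots\int_0^{2\pi} g(D(\theta))\,\prod_{k<l}|e^{i\theta_k}-e^{i\theta_l}|^2\,d\theta_1\cdots d\theta_m,$$
where $D(\theta) = \mathrm{diag}(e^{i\theta_1},\ldots,e^{i\theta_m})$. This gives the prefactor and the Vandermonde weight exactly as in the statement. The reference \cite{Sepanski_2007} cited just before the theorem provides this formula.

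It then remains to evaluate $|\det(I - z\rho(D(\theta)))|^2$. From the definition of the LO action, $D(\theta)$ sends $\a{k}$ to $e^{i\theta_k}\a{k}$, so
$$\rho(D(\theta))\,\state \;=\; \prod_{k=1}^{m} e^{i n_k\theta_k}\,\state \;=\; e^{i(n_1\theta_1+\cdots+n_m\theta_m)}\,\state.$$
Hence $\rho(D(\theta))$ is diagonal in the Fock basis $\{\state \mid \homogeneous\}$ of $\Fock$, and so
$$\det(I - z\rho(D(\theta))) \;=\; \prod_{\homogeneous}\bigl(1 - z\,e^{i(n_1\theta_1+\cdots+n_m\theta_m)}\bigr).$$
Taking the squared modulus produces the denominator in the claimed formula, and substituting into the Weyl formula finishes the proof.

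The only non-routine step is invoking the Weyl integration formula with its correct Jacobian; once that is in hand, the computation of $\rho(D(\theta))$ on Fock states is immediate from the definition of the action of $U(m)$ on creation operators, and nothing else has to be checked.
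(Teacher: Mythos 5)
Your proposal is correct and follows exactly the route the paper sketches in the sentence preceding the theorem: observe that $U \mapsto |\det(I - z\rho(U))|^{-2}$ is a class function, apply the Weyl integration formula for $U(m)$ with its Vandermonde Jacobian, and evaluate the determinant on diagonal unitaries using the fact that $\rho(D(\theta))$ is diagonal in the Fock basis with eigenvalues $e^{i(n_1\theta_1+\cdots+n_m\theta_m)}$. Your write-up supplies the details the paper leaves implicit, and all of them check out.
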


We will use this expression of $F$ to compute it in some special cases. Let 
\begin{align*}
    G(\omega_1,\cdots,\omega_m) &= \frac{\prod_{k < j} (\omega_j - \omega_k)}{\prod_{\homogeneous} (1 - z \omega_1^{n_1}\cdots\omega_m^{n_m})}\\
    &= \sum_{q_1,\cdots,q_m = 0}^{+\infty} c_{q_1,\cdots,q_m}(z)\omega_1^{q_1}\cdots\omega_m^{q_m}
\end{align*}
so that by Parseval's theorem:
\begin{equation}
    F(z) = \frac{1}{m!} \sum_{q_1,\cdots,q_m = 0}^{+\infty} |c_{q_1,\cdots,q_m}(z)|^2
\end{equation}

To compute $F$, we will develop $G$ as a series and identify the coefficients $c_{q_1,\cdots,q_m}(z)$. 
The next two results give a computation of the Molien's series in the cases $n=1$ and $n = 2$. The proofs can be found in \cref{proofmolien}.
\begin{proposition}\label{molien1}
    For $n = 1$ and any $m$:
    $$
    F(z) = \frac{1}{1 - |z|^2}
    $$
\end{proposition}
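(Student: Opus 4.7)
My first move is a soft, representation-theoretic shortcut. For $n=1$ the Fock space $\Fock$ is canonically $\mathbb{C}^m$ with the single-photon basis states as the standard basis, and $\rho$ is just the defining representation of $U(m)$. Since $U(m)$ acts transitively on each sphere $\{\alpha : \|\alpha\|^2 = r\}$, an invariant polynomial is determined by its values on the slice $\{(\sqrt{r},0,\ldots,0) : r\geq 0\}$, where phase-invariance in the first mode forces it to be a polynomial in $r$. Hence every invariant polynomial is a polynomial in $\|\alpha\|^2$, and conversely $\|\alpha\|^{2d}$ is invariant. It follows that $I \cap \C_{d,d'}$ is one-dimensional when $d=d'$ and zero otherwise, so $F(z) = \sum_{d\geq 0} z^d\overline{z}^d = 1/(1-|z|^2)$.

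It is instructive, and matches the spirit of the section, to verify this through the integral/series machinery just introduced. At $n=1$ the set of multi-indices with $\homogeneous$ reduces to the $m$ standard basis vectors, so
$$G(\omega_1,\ldots,\omega_m) = \frac{\prod_{k<j}(\omega_j - \omega_k)}{\prod_{k=1}^m (1 - z\omega_k)}.$$
Expanding each $1/(1-z\omega_k)$ as a geometric series and the Vandermonde via Leibniz, and using $\sum_k(\sigma(k)-1) = m(m-1)/2$ to pull out a common power of $z$, one obtains
$$c_{q_1,\ldots,q_m}(z) = z^{\,q_1+\cdots+q_m - m(m-1)/2}\,\det\bigl(\mathbf{1}[j\leq q_k+1]\bigr)_{1\leq k,j\leq m}.$$

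The combinatorial crux, and the only non-routine step, is to classify when this $0/1$ determinant is nonzero. Set $r_k = \min(q_k+1,m)$; row $k$ has exactly $r_k$ leading $1$s followed by $0$s, so two equal $r_k$ force two equal rows and the determinant vanishes. Conversely, if the $r_k$ are pairwise distinct then sorting rows by increasing $r_k$ yields a lower-triangular matrix with $1$s on the diagonal, so the determinant is $\pm 1$. This happens iff $(q_1,\ldots,q_m)$ is a permutation of $(0,1,\ldots,m-2,q)$ for some $q \geq m-1$, and each such $q$ produces exactly $m!$ ordered configurations, each contributing $|z|^{2(q-(m-1))}$ to the Parseval sum $F(z) = \frac{1}{m!}\sum_{q_1,\ldots,q_m}|c_{q_1,\ldots,q_m}(z)|^2$. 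The $1/m!$ symmetry factor then cancels the $m!$ orderings, and setting $s = q-(m-1)\geq 0$ collapses everything to $\sum_{s\geq 0}|z|^{2s} = 1/(1-|z|^2)$, matching the soft argument.
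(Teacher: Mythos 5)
Your proof is correct, and it reaches the result by a route that genuinely differs from the paper's. The paper works entirely inside the series machinery: it expands $G$ into the system $q_k = p_k + \sigma(k) - 1$ and kills almost all terms with a sign-reversing involution (shifting one unit of $p$ from index $k$ to $k+1$ while composing $\sigma$ with the transposition $(k,k+1)$), leaving only the solutions $p=(0,\dots,0,p_m)$ up to permutation. Your second part does the same combinatorial work but packages it better: recognizing $c_{q_1,\dots,q_m}(z)$ as a power of $z$ times the $0/1$ determinant $\det(\mathbf{1}[j\le q_k+1])$ turns the paper's involution into the one-line observation that a matrix with two equal rows has zero determinant, and the surviving configurations are exactly the ones the paper finds. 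Your first part is a different argument altogether: transitivity of $U(m)$ on spheres plus phase invariance on the slice $(\sqrt{r},0,\dots,0)$ shows directly that the invariant ring is $\mathbb{C}[\|\alpha\|^2]$, from which the series is immediate with no integral at all. That soft argument is not only shorter, it proves more: it establishes outright that $\|\alpha\|^2$ generates the invariants, whereas the paper has to infer this afterwards from the shape of $F$ while acknowledging the caveat of \cref{hidden} about hidden generators. What the paper's heavier route buys is that it is the warm-up for \cref{molien2}, where no transitivity shortcut exists and the pairing technique is the only tool that survives; your determinant reformulation, unfortunately, does not extend as cleanly there because the exponents $q_k$ become quadratic in the $p_{k,j}$.
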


\begin{proposition} \label{molien2}
    For $n = 2$ and any $m$:
    $$
    F(z) = \frac{1}{(1 - |z|^2)(1 - |z|^4) \cdots (1 - |z|^{2m})}
    $$
\end{proposition}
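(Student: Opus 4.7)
The plan is to exploit the Parseval-type formula from the excerpt by expanding $G$ explicitly. For $n=2$, the solutions of $n_1+\cdots+n_m = 2$ split into tuples of type $(2,0,\ldots,0)$ and type $(1,1,0,\ldots,0)$, so the denominator of $G$ factors as
$$\prod_{k=1}^m (1-z\omega_k^2)\prod_{k<j}(1-z\omega_k\omega_j) \;=\; \prod_{1\leq k\leq j\leq m}(1-z\omega_k\omega_j).$$
After substituting $x_k = \sqrt{z}\,\omega_k$, I would invoke the classical Littlewood identity
$$\prod_{1 \leq i \leq j \leq m} \frac{1}{1-x_ix_j} \;=\; \sum_{\substack{\lambda \text{ even}\\ \ell(\lambda)\leq m}} s_\lambda(x_1,\ldots,x_m),$$
where ``$\lambda$ even'' means every part of $\lambda$ is an even integer, and then use homogeneity of the Schur polynomials ($s_\lambda$ has degree $|\lambda|$) to rewrite the reciprocal of the denominator as $\sum_{\lambda \text{ even}} z^{|\lambda|/2}s_\lambda(\omega)$.

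Next I would multiply by the Vandermonde factor in the numerator and invoke the bialternant formula $\det(\omega_i^{m-j})\,s_\lambda(\omega) = \det(\omega_i^{\lambda_j+m-j})$, yielding
$$G(\omega) \;=\; (-1)^{\binom{m}{2}}\!\!\sum_{\substack{\lambda \text{ even}\\ \ell(\lambda)\leq m}} z^{|\lambda|/2}\,\det\bigl(\omega_i^{\mu_j}\bigr), \qquad \mu_j := \lambda_j + m - j.$$
Each determinant expands into $m!$ signed monomials $\omega_1^{q_1}\cdots\omega_m^{q_m}$ indexed by the permutations of the strictly decreasing tuple $(\mu_1,\ldots,\mu_m)$. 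Consequently $|c_{q_1,\ldots,q_m}(z)|^2 = |z|^{|\lambda|}$ whenever the $q_i$ are pairwise distinct and their decreasing rearrangement $\mu$ satisfies $\mu_j-(m-j)\in 2\mathbb{N}$, and vanishes otherwise.

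Finally, each admissible even $\lambda$ contributes exactly $m!$ tuples $q$, each of modulus squared $|z|^{|\lambda|}$, so the Parseval formula reduces to
$$F(z) \;=\; \frac{1}{m!}\sum_{q_1,\ldots,q_m\geq 0} |c_{q_1,\ldots,q_m}(z)|^2 \;=\; \sum_{\substack{\lambda \text{ even}\\ \ell(\lambda)\leq m}} |z|^{|\lambda|} \;=\; \sum_{\nu:\ \ell(\nu)\leq m}|z|^{2|\nu|},$$
after setting $\lambda=2\nu$. The last series is the standard generating function for partitions with at most $m$ parts, which equals $\prod_{k=1}^m (1-|z|^{2k})^{-1}$, giving the claim.

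The main obstacle is invoking the correct Littlewood identity and tracking the parity condition $\lambda_j\in 2\mathbb{N}$ simultaneously with the Vandermonde sign. If one prefers to avoid citing Littlewood, the identity can be re-derived by expanding $\prod_{k\leq j}(1-x_kx_j)^{-1}$ via the Cauchy identity on doubled variables, or equivalently by recognising the denominator as the character of a symmetric algebra on $\mathrm{Sym}^2(\mathbb{C}^m)$; in either case, once the expansion of $G$ into the alternating polynomials $\det(\omega_i^{\mu_j})$ is available, Parseval collapses the sum cleanly onto even partitions and the closed form drops out.
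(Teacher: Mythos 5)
Your proof is correct, but it takes a genuinely different route from the paper's. The paper expands $G$ directly as a sum over symmetric exponent arrays $(p_{k,j})$ and permutations $\sigma$, then constructs by hand a cascade of sign-reversing pairings that cancel everything except the terms with diagonal, weakly increasing $p$; this is self-contained but intricate. You instead recognise the denominator as $\prod_{1\le i\le j\le m}(1-z\omega_i\omega_j)$, i.e.\ (after rescaling) the character of the symmetric algebra on $\mathrm{Sym}^2(\mathbb{C}^m)$, and invoke Littlewood's identity to expand its reciprocal into Schur polynomials indexed by even partitions; the Vandermonde numerator then converts each $s_\lambda$ into a single alternant $\det(\omega_i^{\lambda_j+m-j})$ whose $m!$ monomials have pairwise distinct exponent tuples and coefficients $\pm z^{|\lambda|/2}$, so Parseval collapses to $\sum_{\nu,\ \ell(\nu)\le m}|z|^{2|\nu|}$ and the product formula follows. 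The two computations agree at the end: the paper's surviving diagonal solutions are exactly your even partitions $\lambda=2\nu$ shifted by the staircase, and both count partitions into at most $m$ parts. What your route buys is brevity and a conceptual explanation of the multiplicity-one structure (the decomposition $\mathrm{Sym}(\mathrm{Sym}^2\mathbb{C}^m)\cong\bigoplus_{\lambda\ \mathrm{even}}V_\lambda$); the cost is reliance on a cited classical identity whose standard proofs essentially contain the paper's involution. Your framing also makes transparent why $n\ge 3$ resists this method, echoing the paper's remark that the coefficients stop being $-1,0,1$: for $n\ge 3$ the plethysm $\mathrm{Sym}(\mathrm{Sym}^n\mathbb{C}^m)$ no longer has multiplicities in $\{0,1\}$, so the $c_q(z)$ cease to be single signed monomials.
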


When $n = 1$, there is no hidden generator. The only possibility for the Molien's series to be as computed in \cref{molien1} is for the invariants to be generated by a single polynomial of degree 2, namely $\|\alpha\|^2$. This corroborates the fact that any single-photon transformation is possible as long as it preserves the norm.\\

Considering how much more difficult the proof is for $n = 2$ compared to $n = 1$, one can imagine why the next cases are out of reach with this method. We got lucky that in these first two cases the coefficients end up being just $-1, 0$ or $1$, it is no longer the case for $n \geq 3$.\\

Similarly for phase-invariant:
\begin{proposition}\label{molien-phase}
    We have:
    \begin{align*}F'(z) &=\int_{D(m)}\frac{dU}{|1-z\rho(U)|^2}\\ 
    &= \frac{1}{(2 \pi)^m} \int_{0}^{2\pi}\cdots\int_{0}^{2\pi} \frac{d\theta_1\cdots d \theta_m }{\prod_{n_1+\cdots+n_m = n}|1 - z e^{i(n_1 \theta_1+\cdots+n_m\theta_m)}|^2}
    \end{align*}
    Let 
    \begin{align*}
        G'(\omega_1,\cdots,\omega_m) &= \frac{1}{\prod_{\homogeneous} (1 - z \omega_1^{n_1}\cdots\omega_m^{n_m})}\\
    &= \sum_{q_1,\cdots,q_m = 0}^{+\infty} c'_{q_1,\cdots,q_m}(z)\omega_1^{q_1}\cdots\omega_m^{q_m}
    \end{align*}
    Then:
    $$F'(z) = \sum_{q_1,\cdots,q_m = 0}^{+\infty} |c'_{q_1,\cdots,q_m}(z)|^2$$
\end{proposition}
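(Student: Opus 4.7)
The three equalities follow the same template as the Molien-formula statements proved earlier for the full group $U(m)$, so the plan is to adapt those arguments to the diagonal subgroup $D(m)$ and then apply Parseval's theorem on the torus. For the first display, I would reuse the representation-theoretic proof of Molien's formula cited via \cite{Forger_1998}: the space $\C_{d,d'}$ carries a representation of $D(m)$ induced by $\rho$ on the $\alpha$ factors and its conjugate on the $\overline\alpha$ factors; the dimension of the phase-invariant subspace is the multiplicity of the trivial representation, which by Schur orthogonality equals $\int_{D(m)} \chi_{d,d'}(U)\,dU$. Assembling the generating function over $d,d'$ turns the character sum into $1/|\det(I - z\rho(U))|^2$, and swapping sum and integral (valid for $|z|<1$) produces the first formula. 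No Weyl-integration step is required here, in contrast with the $U(m)$ case, because $D(m)$ is already a torus.

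For the second display, I would parameterize $D(m) = \{\mathrm{diag}(e^{i\theta_1},\ldots,e^{i\theta_m}) : \theta_k \in [0,2\pi)\}$ with Haar measure $(2\pi)^{-m}\,d\theta_1\cdots d\theta_m$. The key observation is that $\rho(U)$ acts diagonally on the Fock basis of $\Fock$: since the creation operators pick up a phase $\a{k} \mapsto e^{i\theta_k}\a{k}$, the definition of $U.\state$ in \cref{defnot} gives $\rho(U)\state = e^{i(n_1\theta_1+\cdots+n_m\theta_m)}\state$. Therefore $\det(I - z\rho(U)) = \prod_{\homogeneous}(1 - z e^{i(n_1\theta_1+\cdots+n_m\theta_m)})$, and substituting into the first display produces the announced integral.

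For the third display, setting $\omega_k = e^{i\theta_k}$ rewrites the integrand of the second display as $|G'(\omega_1,\ldots,\omega_m)|^2$ on the torus $(S^1)^m$. For $|z|<1$ each factor $1/(1 - z\omega_1^{n_1}\cdots\omega_m^{n_m})$ expands as an absolutely convergent geometric series in non-negative powers of the $\omega_k$'s, so $G'$ is a genuine $L^2$ function on the torus whose Fourier coefficients in the orthonormal basis $\{\omega_1^{q_1}\cdots\omega_m^{q_m}\}_{(q_1,\ldots,q_m)\in\mathbb{Z}^m}$ are the $c'_{q_1,\ldots,q_m}(z)$ for non-negative tuples and vanish otherwise. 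Parseval's identity then yields $F'(z) = \|G'\|_{L^2}^2 = \sum_{q_1,\ldots,q_m \geq 0}|c'_{q_1,\ldots,q_m}(z)|^2$, and the identity extends to the full domain of analyticity by uniqueness of analytic continuation. The main technical obstacle is bookkeeping: verifying that the character assembly really produces $|\det(I-z\rho(U))|^2$ (with the $\overline z$ coming from the conjugate representation on $\overline\alpha$) and justifying the interchange of summation and integration, both of which are routine once one restricts to $|z|<1$.
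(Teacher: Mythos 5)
Your proposal is correct and follows essentially the same route the paper takes: the paper gives no separate proof of \cref{molien-phase} but derives it ``similarly'' to the $U(m)$ case, i.e.\ Molien's formula with conjugation applied to the subgroup $D(m)$ (which, being a torus, needs no Weyl integration), the diagonal action $\rho(U)\state = e^{i(n_1\theta_1+\cdots+n_m\theta_m)}\state$ to evaluate the determinant, and Parseval on the torus to extract $\sum_{q}|c'_{q}(z)|^2$ (correctly without the $1/m!$ factor). Your bookkeeping of the $z,\overline{z}$ grading and the $|z|<1$ convergence justification are consistent with what the paper leaves implicit.
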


\section{Applications}\label{compsect}

\subsection{Phase-invariant monomials}

Given generic states $\ket{\alpha}, \ket{\beta} \in \Fock$, the question of whether or not there is a LO circuit transforming one into the other just means finding a solution $U \in U(m)$ to the system 
\begin{equation}\label{system}
    \ket{\beta} = \rho(U)\ket{\alpha}
\end{equation}

This is a polynomial system of equations in $U$, so one can use Gröbner bases to study it. More precisely, if we add the constraints $U^\dag U = I$ to the system and use Gröbner bases algorithm to eliminate $U$ from the system, we get a generating set of relations $f(\alpha, \beta) = 0$ that need to be satisfied. However, in practice, this computation is impossible since computing Gröbner bases is too time-consuming.\\

Since we know that invariants are obtained from averaging phase-invariant monomials, we focus on the case where $U$ is diagonal. Let $U = diag(\omega_1, \cdots \omega_m)$. In this case, the system (\ref{system}) becomes
\begin{equation}\label{diagsyst}
    \coeffbeta = \omega_1^{n_1}\cdots \omega_m^{n_m} \coeffalpha
\end{equation}
for each $\homogeneous$. We also need to add new variables representing the complex conjugate $\overline{\alpha},\overline{\beta}$ and $\overline{U}$, with the equation 
\begin{equation}\label{diagconj}
        \overline{\coeffbeta} = \overline{\omega_1}^{n_1}\cdots \overline{\omega_m}^{n_m} \overline{\coeffalpha}
\end{equation}

These new variables are treated independently from their counterparts, the bar notation is just a convenient notation, but formally, they are new symbols. Finally, we also need to add equations to ensure that $U$ is unitary, namely for each $j = 1,\cdots ,m$:
\begin{equation}\label{unitsyst}
    \omega_j\overline{\omega_j} = 1
\end{equation}

To compute generating sets of phase invariants, we eliminate the variables $\omega_j$ and $\overline{\omega_j}$ from the system defined by \cref{diagsyst,diagconj,unitsyst} by computing a Gröbner basis. The wanted set is the set of monomials $f$ such that $f(\alpha) - f(\beta)$ belongs to the calculated Gröbner basis \cite{Sturmfels_2008}. Using this method, we computed generating sets of phase invariants which can be found in \cref{listmonom}.\\

For example, for $n=m=2$, we have 5 generating invariants: $f_1 = |\alpha_{20}|^2,f_2 = |\alpha_{11}|^2,f_3 = |\alpha_{02}|^2,f_4 = \alpha_{11}^2\overline{\alpha_{02}\alpha_{02}}$ and $f_5 = \overline{f_4}$. They satisfy a unique relation, namely: $f_4f_5 = f_1^2f_2f_3$ of degree 8. Hence, the Molien series is 
$$
F'(z) = \frac{1-|z|^8}{(1-|z|^2)^3(1-|z|^4)^2}
$$
which can be verified numerically using \cref{molien-phase}.
\subsection{General case}

Let's sum up our approach to determine if a state is reachable from another.
\begin{itemize}
    \item First, we compute the Molien's series. If possible, we compute it exactly otherwise we compute the first terms of its power expansion.
    \item Second, we compute invariants in the relevant degrees witnessed by the Molien's series using the averaging operator.
    \item Repeat the previous step until the Molien's series of the computed invariants matches the real Molien's series.
    \item Finally, we evaluate these invariants.
\end{itemize}

Although the first two steps are expensive, they only need to be done once for each value of $n$ and $m$.\\

As an example, for $n = m = 2$ we have $$F(z) = \frac{1}{(1 - |z|^2)(1-|z|^4)}$$ so it is clear that 2 and 4 are degrees of interest. In degree 2, we have the squared norm: $\|\alpha\|^2= 2|\alpha_{20}|^2 + 2|\alpha_{02}|^2 + |\alpha_{11}|^2$ and in degree 4, we can take:
\begin{align*}
\left(|\alpha_{20}|^4\right)^* &= \frac{8}{15} (
6 |\alpha_{02}|^{4} + 6 |\alpha_{20}|^{4} + |\alpha_{1 1}|^{4} + 6 |\alpha_{0 2}\alpha_{1 1}|^2 + 6 |\alpha_{20}\alpha_{1 1}|^2\\ &+ 4 |\alpha_{0 2}\alpha_{2 0}|^2 + 2 \alpha_{0 2}\alpha_{2 0}\overline{\alpha_{1 1}}^{2} + 2 \alpha_{1 1}^{2}\overline{\alpha_{0 2}}\overline{\alpha_{2 0}})
\end{align*}

These two invariants are algebraically independent, so their Molien's series matches $F$ and we are done. Note that the choice of $\left(|\alpha_{20}|^4\right)^*$ was arbitrary, many other polynomials would have worked. For example:
\begin{equation}
    3(\alpha_{20}\alpha_{02} \overline{\alpha_{11}}^2)^* = 3 \left(|\alpha_{20}|^4\right)^* - 2 \|\alpha\|^4
\end{equation}

\subsection{Tensor invariants}\label{tenssect}

In this subsection, we explore another way to compute invariants. It is much more convenient and also enough to provide a generating set. Instead of using the vector $\alpha$, we can represent a bosonic state by a $m \times \cdots \times m$ (with $n$ factors) symmetric tensor $A$ where the coefficient $A_{k_1, \cdots ,k_n}$ is the amplitude corresponding to the $j-$th photon being in mode $k_j$ for each $j$. A coefficient $\coeffalpha$ is split between all the possible repartition of photons giving the right number of in each mode. There are $\binom{n}{n_1, \cdots, n_m} = \frac{n!}{n_1! \cdots n_m!}$ such repartitions. So
\begin{equation}
    A_{k_1, \cdots ,k_n} = \binom{n}{n_1, \cdots, n_m}^{-1} \coeffalpha
\end{equation}
where $n_i$ is the number of indices $j$ such that $k_j = i$. Now the action of a unitary $U \in U(m)$ is easy to describe, it is just a contraction of tensors:
\begin{equation}
    (U.A)_{k_1, \cdots,k_n} = \sum_{j_1, \cdots,j_n = 1}^{m} U_{k_1,j_1}\cdots U_{k_n,j_n} A_{j_1,\cdots,j_n}
\end{equation}

With this representation, we obtain numerous invariant quantities. As an example:
\begin{equation}\label{normtens}
    \sum_{k_1, \cdots,k_n = 1}^{m} A_{k_1, \cdots,k_n} \overline{A_{k_1, \cdots,k_n}}
\end{equation}
is preserved because when a unitary is applied, it cancels out:

\begin{align*}
 &\sum_{k_1, \cdots,k_n = 1}^{m} \sum_{j_1, \cdots,j_n = 1}^{m}  \sum_{i_1, \cdots,i_n = 1}^{m} U_{k_1,j_1}\cdots U_{k_n,j_n}\overline{U_{k_1,i_1}}\cdots \overline{U_{k_n,i_n}}  A_{j_1,\cdots,j_n} \overline{A_{i_1, \cdots,i_n}}\\
 &= \sum_{j_1, \cdots,j_n = 1}^{m}  \sum_{i_1, \cdots,i_n = 1}^{m} \delta_{j_1,i_1} \cdots \delta_{j_n,i_n} A_{j_1,\cdots,j_n} \overline{A_{i_1, \cdots,i_n}}\\
 &= \sum_{j_1, \cdots,j_n = 1}^{m}  A_{j_1,\cdots,j_n} \overline{A_{j_1, \cdots,j_n}}
\end{align*}

This invariant is none other than $\|\alpha\|^2$ (up to some multiplicative constant). In general, any contraction of the following form is invariant of degree $2d$ for the same reason:
\begin{equation}\label{invtens}
    f_\sigma(\alpha) = \sum_{\bm k \in \{1, \cdots, m\}^{nd}}A^{\otimes d}_{\bm k} \overline{A}^{\otimes d}_{\sigma.\bm k}
\end{equation}
where $d$ in any natural number, $\sigma \in S_{nd}$ is any permutation, and $\sigma. \bm k$ denotes the permutation of indices: $(\sigma. \bm k)_j = \bm k_{\sigma(j)}$ for $1 \leq j \leq nd$. In other words, we pair the indices of $A^{\otimes d}$ and $\overline{A}^{\otimes d}$ two by two to contract them. This way, when a unitary is applied, it cancels out just like before. These invariants should be thought of as some sort of tensorial norm.\\ 

\begin{theorem}
    Invariants of the form $f_\sigma$ are enough to generate them all. 
\end{theorem}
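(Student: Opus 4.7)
The plan is to reduce to averaging phase-invariant monomials and then recognize the Weingarten expansion of each such average as a linear combination of $f_\sigma$ invariants.

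First I would reduce to monomials. As observed after Definition 2, every invariant $f$ equals $g^*$ (take $g=f$), and since averaging is linear and annihilates non-phase-invariant monomials, we may assume $g$ is a phase-invariant monomial, hence has the same total degree $d$ in $\alpha$ as in $\overline{\alpha}$. Translated into the tensor representation of the previous subsection, such a monomial becomes, up to a multinomial constant, $h(A) = A_{\bm k_1}\cdots A_{\bm k_d}\,\overline{A_{\bm \ell_1}}\cdots \overline{A_{\bm \ell_d}}$ for fixed multi-indices $\bm k_i,\bm \ell_i \in \{1,\dots,m\}^n$. I would also note that $f_{\sigma_1}\cdot f_{\sigma_2}$ is again of the form $f_\sigma$ (for $\sigma$ acting as $\sigma_1$ on the first block of indices and as $\sigma_2$ on the second), so generation by sums and products reduces to generation as a vector space.

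Next, I would apply Haar averaging explicitly. The action of $U$ on $A^{\otimes d}$ contracts each of the $nd$ indices with a factor of $U$, so after pulling $A$ and $\overline{A}$ out of the integral,
\begin{equation*}
h^*(A) = \sum_{\bm j,\bm j' \in \{1,\dots,m\}^{nd}} \biggl(\int_{U(m)} \prod_{r=1}^{nd} U_{K_r, j_r}\,\overline{U_{L_r, j'_r}}\,dU\biggr)\,A^{\otimes d}_{\bm j}\,\overline{A^{\otimes d}}_{\bm j'},
\end{equation*}
where $K_1,\dots,K_{nd}$ is the flattening of $\bm k_1,\dots,\bm k_d$ into one tuple (and similarly for $L_\bullet$). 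The key step is to invoke the Weingarten formula of Section 3.2 with $nd$ factors of each type, which replaces the integral with a sum over $(\sigma,\tau) \in S_{nd}^2$ of $W(\tau\sigma^{-1})$ times Kronecker deltas $\delta_{K_r, L_{\sigma(r)}}$ on the fixed input indices and $\delta_{j_r, j'_{\tau(r)}}$ on the summation indices. The former contribute a scalar determined by $h$; the latter force $j'_r = j_{\tau^{-1}(r)}$ after summation, so the sum over $\bm j, \bm j'$ collapses to $\sum_{\bm j} A^{\otimes d}_{\bm j}\,\overline{A^{\otimes d}}_{\tau^{-1}.\bm j} = f_{\tau^{-1}}(\alpha)$. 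Hence each Weingarten term is a scalar multiple of some $f_\sigma$, and $h^*$ is a $\mathbb{C}$-linear combination of them, which together with the reduction above proves the theorem.

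The main obstacle I expect is not mathematical but combinatorial: flattening the index structure of $A^{\otimes d}$ into a single length-$nd$ tuple consistently, and carefully tracking which permutation coming out of the Weingarten formula corresponds to which $f_\sigma$ under the convention used in defining these invariants. All the tools — the averaging operator, the Weingarten formula, and the tensor representation — are already in place in the paper, so beyond organizing indices the remaining work is routine.
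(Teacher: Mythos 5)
Your proposal is correct and follows essentially the same route as the paper: reduce to Haar-averaging phase-invariant monomials written in the tensor representation as $A^{\otimes d}_{\bm k}\overline{A}^{\otimes d}_{\bm k'}$, apply the Weingarten formula, and observe that the surviving Kronecker deltas collapse the sum over tensor indices into the contractions $f_\sigma$, so the average is a linear combination of them. Your extra remark that products of $f_\sigma$'s are again of this form is a small addition the paper leaves implicit, but otherwise the arguments coincide.
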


\begin{proof}
    As we already know, invariants are generated by averages of monomials. Let $A^{\otimes d}_{\bm k} \overline{A}^{\otimes d}_{\bm {k'}}$ be such a monomial with $\bm k$ and $\bm{k'}$ two $nd$-tuples. Then:

\begin{align*}
    \left(A^{\otimes d}_{\bm k} \overline{A}^{\otimes d}_{\bm {k'}}\right)^* &= 
    \sum_{\bm {j,j'}\in \{1, \cdots, m\}^{nd}} A^{\otimes d}_{\bm j} \overline{A}^{\otimes d}_{\bm{j'}} \int_{U(m)} U_{k_1,j_1} \cdots U_{k_{nd},j_{nd}} \overline{U_{k'_1,j'_1}} \cdots \overline{U_{k'_{nd},j'_{nd}}} dU\\
    &= \sum_{\bm {j,j'}\in \{1, \cdots, m\}^{nd}} A^{\otimes d}_{\bm j} \overline{A}^{\otimes d}_{\bm{j'}} \sum_{\sigma, \tau \in S_{nd}} \delta_{\bm k, \tau.\bm{k'}}  \delta_{\bm j, \sigma.\bm{j'}} W(\sigma\tau^{-1})\\
    &= \sum_{\sigma,\tau \in S_{nd}} \delta_{\bm k, \tau.\bm{k'}} W(\sigma\tau^{-1}) \sum_{\bm {j}\in \{1, \cdots, m\}^{nd}}  A^{\otimes d}_{\bm j} \overline{A}^{\otimes d}_{\sigma. \bm{j}}\\
    &= \sum_{\sigma,\tau \in S_{nd}} \delta_{\bm k, \tau.\bm{k'}} W(\sigma\tau^{-1}) f_{\sigma}(\alpha)
\end{align*} 

So $\left(A^{\otimes d}_{\bm k} \overline{A}^{\otimes d}_{\bm {k'}}\right)^*$ is a linear combination of invariants of the kind $f_\sigma(\alpha)$. 
\end{proof}

Moreover, because of the term $\delta_{\bm k, \tau.\bm{k'}}$, the whole expression can only be non-zero when $\bm k$ and $\bm {k'}$ are the same up to a permutation. This is equivalent to $A^{\otimes d}_{\bm k} \overline{A}^{\otimes d}_{\bm {k'}}$ being a phase invariant.\\

Note that when $d = 1$, since $A$ is a symmetric tensor, all the invariants (\ref{invtens}) are the same regardless of $\sigma$. So for any $n$ and $m$, there is only one generating invariant of degree 2, namely $\|\alpha\|^2$ which proves \cref{avgdeg2}.\\

There is an invariant of degree $2d$ of the form (\ref{invtens}) for each $\sigma \in S_{nd}$ which gives $(nd)!$ invariant of degree $2d$. However, by symmetry, a lot of them are just the same. For example, one can permute the $d$ copies of $\overline{A}$ without changing the result. Moreover, since $A$ is a symmetric tensor, indices within the same copy of $\overline{A}$ can also be permuted.

\subsection{Two-photons states}

For two-photon states, the tensor $A$ is just a matrix and a unitary $U$ maps $A$ to $U^TAU$. The following result is a well-known consequence of Takagi's factorization \cite{Horn_Johnson_1985}.

\begin{proposition}
    A transformation $A \mapsto B$ of two-photons states is possible if and only if $A$ and $B$ have the same singular values.
\end{proposition}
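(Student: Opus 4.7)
The plan is to derive both directions of the equivalence from Takagi's factorization, which asserts that any complex symmetric matrix $A \in \mathbb{C}^{m \times m}$ admits a decomposition $A = W_A \Sigma W_A^T$ with $W_A \in U(m)$ and $\Sigma$ diagonal with the singular values of $A$ on the diagonal in, say, decreasing order. Because the action of interest here is $A \mapsto U^T A U$ rather than the usual unitary conjugation, I also need to keep track of how transposition interacts with unitarity: the basic identity I will use is $\overline{U}\,U^T = I$, obtained by conjugating $UU^\dagger = I$ entrywise, together with its transpose $U^T \overline{U} = I$.

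For the forward direction, assume $B = U^T A U$ for some $U \in U(m)$. I would compute
$$B^\dagger B = U^\dagger A^\dagger \overline{U}\,U^T A U = U^\dagger (A^\dagger A) U,$$
so $B^\dagger B$ and $A^\dagger A$ are unitarily similar. Their spectra therefore coincide, and taking positive square roots yields equality of the singular values of $A$ and $B$.

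For the converse, suppose $A$ and $B$ share the same singular values, and fix a common diagonal $\Sigma$ so that Takagi gives $A = W_A \Sigma W_A^T$ and $B = W_B \Sigma W_B^T$. Then I would set $U = \overline{W_A}\,W_B^T$, which belongs to $U(m)$ as a product of two unitaries (both $\overline{W}$ and $W^T$ are unitary whenever $W$ is). Using $W_A^T \overline{W_A} = (W_A^\dagger W_A)^T = I$ (and its conjugate transpose) to cancel every factor of $W_A$, a one-line computation shows $U^T A U = W_B \Sigma W_B^T = B$, producing the desired LO circuit.

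The main obstacle is almost entirely bookkeeping: because the group action uses the transpose rather than the adjoint, one must resist writing $U^\dagger$ when $U^T$ is meant, and apply the right identity at the right place. The genuine mathematical content is concentrated in Takagi's theorem, which is a classical result about the eigendecomposition of $A\overline{A}$ and which I treat here as a black box.
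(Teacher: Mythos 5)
Your proof is correct, but it is essentially the classical Takagi argument that the paper explicitly sets aside: the paper notes the proposition is ``a well-known consequence of Takagi's factorization'' and then deliberately gives a \emph{different}, invariant-theoretic proof. The paper's route observes that the coefficients $f_1,\dots,f_m$ of the characteristic polynomial of $A^\dagger A$ are polynomial invariants of degrees $2,4,\dots,2m$, that they are algebraically independent (since elementary symmetric functions are), and then invokes the Molien series computation of \cref{molien2} to conclude that these $m$ invariants generate the entire invariant ring; combined with \cref{equiv}, equality of these invariants (equivalently, of the singular values) therefore characterizes when the orbits coincide. Your argument buys concreteness: it is self-contained, elementary, and constructive in that it exhibits the unitary $U = \overline{W_A}\,W_B^T$ realizing the transformation, with the transpose bookkeeping ($\overline{U}U^T = I$, $W_A^T\overline{W_A} = I$) handled correctly, and it does not depend on the nontrivial combinatorics behind \cref{molien2}. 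What the paper's approach buys is precisely what your black-boxing of Takagi hides: an explicit finite generating set of polynomial invariants for the $n=2$ case, which is the object the paper is after, and a template that could in principle extend to $n>2$ where no Takagi-type normal form is available. One small point worth making explicit in your write-up: Takagi applies because the two-photon tensor $A$ is complex symmetric, which the paper establishes when introducing the tensor representation.
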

We can reinterpret it and give a new proof with the spectrum of invariant theory, which exhibits a set of generating invariants. 
\begin{proof}
    The singular values of $A$ are not a polynomial function of its coefficients, however, they are the eigenvalues of the matrix $A^{\dag}A$. The characteristic polynomial of $A^{\dag}A$ does not change when a unitary is applied, so all coefficients of this polynomial are invariant. 
    \begin{equation}
        \chi_{A^\dag A}(X) = X^{m} + f_1(\alpha) X^{m-1} + f_2(\alpha) X^{m-2} + \cdots + f_m(\alpha)
    \end{equation}
    
    This way, we get an invariant of degree $2k$ for each $1 \leq k \leq m$. Moreover, these invariants are algebraically independent. This is a consequence of the fact that elementary symmetric functions are algebraically independent \cite{Macdonald_Macdonald_2015}. By \cref{molien2}, $f_1, \dots, f_m$ are a generating set of invariants so they classify the allowed transformations. The result follows immediately because the roots of $\chi_{A^\dag A}$ determine its coefficients and vice-versa.
\end{proof}
As an example, for $n = m = 2$, the two invariants are $tr(A^{\dag}A) = \|\alpha\|^2$ and $det(A^{\dag}A) = |det(A)|^2 = |\alpha_{11}-4\alpha_{20}\alpha_{02}|^2$ (up to a multiplicative constant).

\section{Conclusion}

We provided a condition which is both necessary and sufficient for computation to be possible in LO. It has the form of finitely many polynomial functions to evaluate. Although the proof is not constructive, we made exact computation for $n = 1,2$. The cases $n>2$ are still open. A continuation to this work could be to introduce post-selection and heralding to our scheme. Finally, invariants are typically a sum of many monomials so another open problem is to efficiently evaluate them on a concrete states.

\section*{Acknowledgment}
We would like to thank Boris Bourdoncle and Timothée Goubault for their feedback and constructive remarks. This work has been co-funded by the Horizon-CL4 program under the grant agreement 101135288 for EPIQUE project, by the CIFRE n° 2024/0083 and by the PROQCIMA and TUF-TOPIQC program within the French National Quantum Strategy (France 2030).

\newpage

\appendix
\section{Proofs of \cref{molien1,molien2}}\label{proofmolien}
\subsection{Proof of \cref{molien1}}

First, note that:

\begin{equation}
    \prod_{k < j} (\omega_j - \omega_k) = \sum_{\sigma \in S_m} \varepsilon(\sigma) \omega_1^{\sigma(1) - 1} \cdots \omega_m^{\sigma(m) - 1}
\end{equation}
is a Vandermonde determinant.

Let $n = 1$, in this case:
    \begin{align*}
    G(\omega_1,\cdots,\omega_m)
    &= \frac{\prod_{k < j} (\omega_j - \omega_k)}{\prod_{k = 1}^{m} (1 - z \omega_k)}\\
    &= \sum_{p_1, \cdots, p_m = 0}^{+\infty}z^{p_1+\cdots+p_m}\omega_1^{p_1}\cdots\omega_m^{p_m}\sum_{\sigma \in S_m} \varepsilon(\sigma) \omega_1^{\sigma(1) - 1}\cdots\omega_m^{\sigma(m) - 1}\\
    &= \sum_{p_1, \cdots, p_m = 0}^{+\infty}\sum_{\sigma \in S_m}\varepsilon(\sigma) z^{p_1+\cdots+p_m} \omega_1^{p_1+\sigma(1) - 1}\cdots\omega_m^{p_m+\sigma(m) - 1}\\
    &= \sum_{q_1, \cdots, q_m = 0}^{+\infty} c_{q_1,\cdots q_m}(z)\omega^{q_1}\cdots \omega^{q_m}
    \end{align*}

We need to group terms that have the same power, which means for fixed $q_1, \cdots, q_m$, studying the system:

$$\begin{cases}
    q_1 &= p_1 + \sigma(1) - 1\\
    &\vdots\\
    q_m &= p_m + \sigma(m) - 1
\end{cases}$$

Note that two solutions of this system $p = (p_1,\cdots,p_m),\sigma$ and $p' = (p'_1,\cdots,p'_m),\sigma'$ always give $p_1+\cdots+p_m =p'_1+\cdots+p'_m$, by summing all the lines of the system, so they give the same power of $z$. So we just need to add up the terms $\varepsilon(\sigma)$ for each solution, most of them will cancel out.\\

Let $p,\sigma$ be a solution and let $k$ be the smallest index such that $p_{\sigma^{-1}(k)} \neq 0$. Suppose without loss of generality, up to reordering $q$, that $\sigma = id$. Suppose that $k < m$ and let $p'_k = p_k - 1$, $p'_{k + 1} = p_{k + 1} + 1$, and for $j \neq k,k+1$: $p'_j = p_j$. Let $\sigma'$ be the transposition $(k,k+1)$. This way we have defined another solution to the system and moreover, we have that $k$ is the smallest index such that $p'_{\sigma'^{-1}(k)} \neq 0$, hence, the mapping $p,\sigma \mapsto p',\sigma'$ is an involution.
$$\begin{cases}
    q_k &= p_k + (k-1) = (p_k - 1) + k\\
    q_{k+1} &= p_{k+1} + k = (p_{k+1} + 1) + (k - 1)\\
\end{cases}$$

Also note that $\varepsilon(\sigma') = -\varepsilon(\sigma)$ so we can pair the two solutions $p,\sigma$ and $p',\sigma'$ to cancel them out in the computation of $c_{q_1,\cdots, q_m}(z)$. This way, all the solutions such that $k$ is the smallest index such that $p_{\sigma^{-1}(k)} \neq 0$ and $k < m$ will cancel out. This is valid for all $k < m$ so the only terms which have a contribution are those of the form $p = (0,\cdots,0,p_m)$ (up to permutation). They contribute to the $q$ of the form $(0, 1, \cdots, m - 2, q_m)$ with $q_m \geq m - 1$. Reciprocally, for such a $q$, there a unique $p$ of the desired form, namely $(0, \cdots,0,q_m - (m-1))$. So:
\begin{equation}
    G(\omega_1,\cdots,\omega_m) = \sum_{p = 0}^{\infty}\sum_{\sigma \in S_m} \epsilon(\sigma) z^{p} \omega_{\sigma^{-1}(1)}^{0} \omega_{\sigma^{-1}(2)}^{1}\cdots\omega_{\sigma^{-1}(m-1)}^{m-2}\omega_{\sigma^{-1}(m)}^{m-1+p}
\end{equation}
and
\begin{equation}
    F(z) = \frac{1}{m!} \sum_{p = 0}^{\infty} \sum_{\sigma \in S_m} |z|^{2p} = \frac{1}{1-|z|^2}
\end{equation}

\subsection{Proof of \cref{molien2}}

    The proof is similar to the previous one: we try to pair terms to cancel them out two by two. However, it is much more involved as the combinatorics is more complex. Let $n = 2$, in this case, we have:
    \begin{align*}
    G(\omega_1,\cdots,\omega_m)
    &= \frac{\prod_{k < j} (\omega_j - \omega_k)}{\prod_{k = 1}^{m} (1 - z \omega_k^2)\prod_{k \neq j}^{m} (1 - z \omega_k \omega_j)}\\
    &= \sum_{\substack{(p_{k,j})_{k,j}\\p_{k,j} = p_{j,k}}}\sum_{\sigma\in S_m} \varepsilon(\sigma)z^{\sum_{k, j} p_{k,j}} \omega_1^{2p_{1,1} + \sum_{k \neq 1} p_{1,k} + \sigma(1) - 1}\cdots\omega_m^{2p_{m,m} + \sum_{k \neq m} p_{m,k} + \sigma(m) - 1}
\end{align*}

Hence, the system is:

$$\begin{cases}
    q_1 &= 2p_{1,1} + p_{1,2} + \cdots + \sigma(1) - 1\\
    q_2 &= p_{2,1} + 2p_{2,2} + \cdots + \sigma(2) - 1\\
    &\vdots\\
    q_m &= p_{m,1} + \cdots + 2 p_{m,m} + \sigma(m) - 1
\end{cases}$$
with $p_{l,j} = p_{j,l}$ for all pair of indices. As before, two solutions $p,\sigma$ and $p',\sigma'$ of the same system satisfy $\sum_{k,j} p_{k,j} = \sum_{k,j} p'_{k,j}$ so they give the same power of $z$.\\

Fix $q$. Let $p,\sigma$ be a solution. As before, let $1 < k < m$ be the smallest index such that $p_{\sigma^{-1}(1),\sigma^{-1}(k)} \neq 0$. Suppose $\sigma = id$ without loss of generality. Set $p'_{k, 1} = p'_{1, k} = p_{1,k}-1$ and $p'_{k+1, 1} = p'_{1, k+1} = p_{1,k+1}+1$ and $p'_{j,l} = p_{j,l}$ for all the other pairs of indices, and $\sigma' = (k, k+1)$. The solution $p'$ also satisfies that $1<k<m$ is the smallest index such that $p'_{\sigma^{'-1}(1),\sigma^{'-1}(k)} \neq 0$. Moreover, $\varepsilon(\sigma) = - \varepsilon(\sigma')$. We pair these solutions and they cancel.
$$\begin{cases}
    q_1 &= \cdots + p_{1,k} + p_{1,k+1}+\cdots+ 0 = \cdots + (p_{1,k}-1) + (p_{1,k+1}+1) + \cdots+ 0\\

    q_k &= \cdots + p_{k, 1} + \cdots + (k - 1) = \cdots + (p_{k, 1} - 1) + \cdots + k\\
    q_{k+1} &= \cdots + p_{k+1,1} + \cdots + k = \cdots + (p_{k+1, 1} + 1) + \cdots + (k - 1)\\
\end{cases}$$
Hence, all the solutions such that $p_{\sigma^{-1}(1),\sigma^{-1}(k)} \neq 0$ for some index $1<k<m$ cancel out. Suppose now that $p_{\sigma^{-1}(1),\sigma^{-1}(k)} = 0$ for all $1<k<m$, with the same reasoning, we prove that all the solutions such that $p_{\sigma^{-1}(2),\sigma^{-1}(k)} \neq 0$ for some index $2<k<m$ don't contribute. Continuing the argument, we only need to consider solutions that are zero expect for the terms of the form $p_{\sigma^{-1}(k),\sigma^{-1}(k)}$ and $p_{\sigma^{-1}(k)\sigma^{-1}(m)}$ (for any $k$).\\

Let's now turn our interest to the later terms. Let $p, \sigma$ be a solution such that the only non-zero terms can only be those described above. We use the same trick as before, but this time using $q_m$ as an intermediate in place of $q_1$. Let $k < m - 1$ be the smallest index such that $p_{\sigma^{-1}(k),\sigma^{-1}(m)} \neq 0$. Suppose as always that $\sigma = id$. We pair this solution with $p',\sigma'$ defined as $p'_{k, m} = p'_{m, k} = p_{m,k}-1$ and $p'_{k+1, m} = p'_{m, k+1} = p_{m,k+1}+1$ and $p'_{j,l} = p_{j,l}$ for all the other pairs of indices, and $\sigma' = (k, k+1)$.

$$\begin{cases}
    q_k &= \cdots + p_{k, m}+ \cdots + (k - 1) = \cdots + (p_{k, m} - 1) + \cdots +  k\\
    q_{k+1} &= \cdots + p_{k+1,m} + \cdots + k = \cdots + (p_{k+1, m} + 1) + \cdots +  (k - 1)\\
    q_m &= \cdots + p_{m,k}+ p_{m,k+1} + \cdots = \cdots + (p_{m,k}-1) + (p_{m,k+1}+1) + \cdots\\
\end{cases}$$

As a consequence, the only remaining solutions are those with non-zero terms only among $p_{\sigma^{-1}(k),\sigma^{-1}(k)}$ (for any $k$) and $p_{\sigma^{-1}(m-1),\sigma^{-1}(m)}$. We now take care of the last non-diagonal term. Let's group these into three categories: 
\begin{itemize}
    \item $p_{\sigma^{-1}(m),\sigma^{-1}(m)} < p_{\sigma^{-1}(m - 1),\sigma^{-1}(m - 1)}$
    \item $p_{\sigma^{-1}(m),\sigma^{-1}(m)} \geq p_{\sigma^{-1}(m - 1),\sigma^{-1}(m - 1)}$ and $p_{\sigma^{-1}(m),\sigma^{-1}(m-1)} \neq 0$
    \item $p_{\sigma^{-1}(m),\sigma^{-1}(m)} \geq p_{\sigma^{-1}(m - 1),\sigma^{-1}(m - 1)}$ and $p_{\sigma^{-1}(m),\sigma^{-1}(m-1)} = 0$
\end{itemize}

Let $p,\sigma$ be of the first kind, with $\sigma = id$. Let $p', \sigma'$ defined as $p'_{m,m} = p_{m,m}$, $p'_{m-1, m-1} = p_{m-1, m-1} - 1$, $p'_{m, m-1} = p'_{m-1, m} = p_{m, m-1} + 1$ and $p'_{k,k} = p_{k,k}$ for $k\leq m-2$. Let $\sigma' = (m-1,m)$. Then $p',\sigma'$ can be paired with $p,\sigma$. Note that $p',\sigma'$ is of the second type. This time if we apply the same construction to $p',\sigma'$, we don't get $p,\sigma$ back, it is not involutive but still defines a bijection from the first kind to the second.

$$\begin{cases}
    q_{m-1} &= 2p_{m-1,m-1} + p_{m-1,m} + m - 2 = 2(p_{m-1,m-1} - 1) + (p_{m-1,m} + 1) + m - 1\\
    q_m &= 2p_{m,m} + p_{m,m-1} + m - 1 = 2p_{m,m} + (p_{m-1,m} + 1) + m - 2\\
\end{cases}$$

Finally, let $k \leq m-2$ be the largest index such that $p_{k,k} > p_{k+1,k+1}$ ($\sigma = id$). Let's define $p' = p$ except for $p'_{k,k} = p_{k,k} - 1$ and $p'_{k+2,k+2} = p_{k+2,k+2} + 1$ with $\sigma' = (k, k+2)$. Then $p,\sigma$ can be paired with $p',\sigma'$.

$$\begin{cases}
    q_{k} &= 2p_{k, k}+k-1 = 2(p_{k, k} - 1)+k+1\\
    q_{k+1} &= 2p_{k+1, k+1}+k = 2p_{k+1, k+1}+k\\
    q_{k+2} &= 2p_{k+2, k+2} + k+1 = 2(p_{k+2, k+2}+1) + k-1
\end{cases}$$

The only contributions left are the $p,\sigma$ such that $p_{k,j} = 0$ for $k\neq j$ and $p_{\sigma^{-1}(1), \sigma^{-1}(1)} \leq p_{\sigma^{-1}(2), \sigma^{-1}(2)} \leq \cdots \leq p_{\sigma^{-1}(m), \sigma^{-1}(m)}$. Back to $G$:

\begin{equation}
    G(\omega_1,\cdots,\omega_m) = \sum_{\sigma \in S_m}\sum_{p_{\sigma^{-1}(1)} \leq \cdots \leq p_{\sigma^{-1}(m)}} \varepsilon(\sigma) z^{p_1+\cdots+p_m} \omega_1^{p_1+\sigma(1)-1}\cdots\omega_m^{p_m + \sigma(m) - 1}
\end{equation}
Moreover, in this expression, no two terms have the same power of $\omega_1,\cdots,\omega_m$, so:

\begin{equation}
F(z) = \sum_{p_1 \leq \cdots \leq p_m} |z|^{2(p_1+\cdots+p_m)} = \sum_{d = 0}^{+\infty} \# \{p_1 \leq \cdots \leq p_m |p_1+\cdots+p_m = d\} |z|^{2d}
\end{equation}

It is well-known, by transposing the Young table, that 
\begin{equation}
    \# \{p_1 \leq \cdots \leq p_m |p_1+\cdots+p_m = d\} = \# (\text{partition of }d\text{ with parts } \leq m)
\end{equation}
So:

\begin{equation}
    F(z) = \sum_{d = 0}^{+\infty} \# (\text{partition of }d \text{ with parts } \leq m) |z|^{2d} = \frac{1}{(1 - |z|^2)\cdots(1-|z|^{2m})}
\end{equation}

\section{Phase-invariant monomials}\label{listmonom}
\subsection{$n = m = 2$}
There are 5 of them:
\begin{itemize}
        \item degree 2: $|\alpha_{20}|^2,|\alpha_{11}|^2, |\alpha_{02}|^2$
        \item degree 4: $\alpha_{11}^2 \overline{\alpha_{20}}\ \overline{\alpha_{02}}$ and its conjugate
    \end{itemize}
    
\subsection{$n = 2,m = 3$}
There are 26 of them:
\begin{itemize}
    \item degree 2: $|\alpha_{200}|^2, |\alpha_{110}|^2, |\alpha_{101}|^2, |\alpha_{020}|^2, |\alpha_{011}|^2, |\alpha_{002}|^2$
    \item degree 4: $\alpha_{110}^2\overline{\alpha_{020}}\ \overline{\alpha_{200}}$,
    $\alpha_{101}\alpha_{110}\overline{\alpha_{011}}\ \overline{\alpha_{200}}$,
    $\alpha_{101}^2\overline{\alpha_{002}}\ \overline{\alpha_{200}}$,
    $\alpha_{020}\alpha_{101}\overline{\alpha_{011}}\ \overline{\alpha_{110}}$,
    $\alpha_{011}\alpha_{101}\overline{\alpha_{002}}\ \overline{\alpha_{110}}$,
    $\alpha_{011}^2\overline{\alpha_{002}}\ \overline{\alpha_{020}}$ and their conjugate
    \item degree 6: $\alpha_{020}\alpha_{101}^2\overline{\alpha_{011}}^2\overline{\alpha_{200}}$,
    $\alpha_{020}\alpha_{101}^2\overline{\alpha_{002}}\ \overline{\alpha_{110}}^2$,
    $\alpha_{011}\alpha_{101}\alpha_{110}\overline{\alpha_{002}}\ \overline{\alpha_{020}}\ \overline{\alpha_{200}}$,
    $\alpha_{011}^2\alpha_{200}\overline{\alpha_{002}}\ \overline{\alpha_{110}}^2$ and their conjugate
    \end{itemize}
    
\subsection{$n = 3,m = 2$}
There are 14 of them: 
    \begin{itemize}
        \item deg 2: $|\alpha_{30}|^2,|\alpha_{21}|^2,|\alpha_{12}|^2,|\alpha_{03}|^2$
        \item deg 4: $\alpha_{21}^2\overline{\alpha_{12}}\ \overline{\alpha_{30}},\alpha_{12}\alpha_{21}\overline{\alpha_{03}}\ \overline{\alpha_{30}},\alpha_{12}^2\overline{\alpha_{03}}\ \overline{\alpha_{21}}$ and their conjugate
        \item deg 6: $\alpha_{21}^3\overline{\alpha_{03}} \overline{\alpha_{30}}^2,\alpha_{12}^3\overline{\alpha_{30}} \overline{\alpha_{03}}^2$ and their conjugate
\end{itemize}

\subsection{$n = 4,m = 2$}
There are 37 of them:
\begin{itemize}
    \item degree 2: $|\alpha_{40}|^2, |\alpha_{31}|^2, |\alpha_{22}|^2,|\alpha_{13}|^2, |\alpha_{04}|^2$

    \item degree 4: $\alpha_{31}^2\overline{\alpha_{22}}\ \overline{\alpha_{40}}$,
  $\alpha_{22}\alpha_{40}\overline{\alpha_{31}}^2$,
  $\alpha_{22}\alpha_{31}\overline{\alpha_{13}}\ \overline{\alpha_{40}}$,
  $\alpha_{22}^2\overline{\alpha_{13}}\ \overline{\alpha_{31}}$,
  $\alpha_{22}^2\overline{\alpha_{04}}\ \overline{\alpha_{40}}$,
  $\alpha_{13}\alpha_{40}\overline{\alpha_{22}}\ \overline{\alpha_{31}}$,
  $\alpha_{13}\alpha_{31}\overline{\alpha_{22}}^2$,
  $\alpha_{13}\alpha_{31}\overline{\alpha_{04}}\ \overline{\alpha_{40}}$,
  $\alpha_{13}\alpha_{22}\overline{\alpha_{04}}\ \overline{\alpha_{31}}$,
  $\alpha_{13}^2\overline{\alpha_{04}}\ \overline{\alpha_{22}}$,
  $\alpha_{04}\alpha_{40}\overline{\alpha_{22}}^2$,
  $\alpha_{04}\alpha_{40}\overline{\alpha_{13}}\ \overline{\alpha_{31}}$,
  $\alpha_{04}\alpha_{31}\overline{\alpha_{13}}\ \overline{\alpha_{22}}$,
  $\alpha_{04}\alpha_{22}\overline{\alpha_{13}}^2$ and their complex conjugate.
    \item degree 6:
    $\alpha_{31}^3\overline{\alpha_{13}}\ \overline{\alpha_{40}}^2$,
  $\alpha_{22}\alpha_{31}^2\overline{\alpha_{04}}\ \overline{\alpha_{40}}^2$,
  $\alpha_{22}^3\overline{\alpha_{13}}^2\overline{\alpha_{40}}$,
  $\alpha_{22}^3\overline{\alpha_{04}}\ \overline{\alpha_{31}}^2$,
  $\alpha_{13}\alpha_{40}^2\overline{\alpha_{31}}^3$,
  $\alpha_{13}^2\alpha_{40}\overline{\alpha_{22}}^3$,
  $\alpha_{13}^2\alpha_{40}\overline{\alpha_{04}}\ \overline{\alpha_{31}}^2$,
  $\alpha_{13}^2\alpha_{22}\overline{\alpha_{04}}^2\overline{\alpha_{40}}$,
  $\alpha_{13}^3\overline{\alpha_{04}}^2\overline{\alpha_{31}}$,
  $\alpha_{04}\alpha_{40}^2\overline{\alpha_{22}}\ \overline{\alpha_{31}}^2$,
  $\alpha_{04}\alpha_{31}^2\overline{\alpha_{22}}^3$,
  $\alpha_{04}\alpha_{31}^2\overline{\alpha_{13}}^2\overline{\alpha_{40}}$,
  $\alpha_{04}^2\alpha_{40}\overline{\alpha_{13}}^2\overline{\alpha_{22}}$,
  $\alpha_{04}^2\alpha_{31}\overline{\alpha_{13}}^3$ and their complex conjugate.
    \item degree 8:
    $\alpha_{31}^4\overline{\alpha_{04}}\ \overline{\alpha_{40}}^3$,
  $\alpha_{13}^4\overline{\alpha_{04}}^3\overline{\alpha_{40}}$,
  $\alpha_{04}\alpha_{40}^3\overline{\alpha_{31}}^4$,
  $\alpha_{04}^3\alpha_{40}\overline{\alpha_{13}}^4$ and their complex conjugate.
\end{itemize}


\bibliographystyle{IEEEtran}
\bibliography{biblio}

\end{document}